\newtheorem{definition}{Definition}
\newtheorem{theorem}{Theorem}
\newtheorem{remark}{Remark}
\newtheorem{lemma}{Lemma}
\newtheorem{proposition}{Proposition}
\renewcommand{\algorithmiccomment}[1]{\bgroup\hfill//~#1\egroup}
\newcommand{\etal}{\textit{et al. }}
\begin{document}
\title{Distributed Task Management in Fog Computing:
A Socially Concave Bandit Game}
\author{Xiaotong Cheng and Setareh Maghsudi
%
\thanks{X. Cheng is with the Department of Computer Science, University of Tübingen, 72074 Tübingen, Germany (email:xiaotong.cheng@uni-tuebingen.de). S. Maghsudi is with the Department of Computer Science, University of Tübingen, 72074 Tübingen, Germany and with the Fraunhofer Heinrich Hertz Institute, 10587 Berlin, Germany. A part of this paper appeared in \cite{cheng2022distributed} at the 2022 IEEE Signal Processing Workshop for Wireless Communications (SPAWC). This work was supported by Grant MA 7111/6-1 from the German Research Foundation (DFG) and Grant 16KISK035 from the German Federal Ministry of Education and Research (BMBF)}
}
\maketitle
\begin{abstract}
Fog computing leverages the task offloading capabilities at the network's edge to improve efficiency and enable swift responses to application demands. However, the design of task allocation strategies in a fog computing network is still challenging because of the heterogeneity of fog nodes and uncertainties in system dynamics. We formulate the distributed task allocation problem as a social-concave game with bandit feedback and show that the game has a unique Nash equilibrium, which is implementable using no-regret learning strategies (regret with sublinear growth). We then develop two no-regret online decision-making strategies. One strategy, namely bandit gradient ascent with momentum, is an online convex optimization algorithm with bandit feedback. The other strategy, Lipschitz bandit with initialization, is an EXP3 multi-armed bandit algorithm. We establish regret bounds for both strategies and analyze their convergence characteristics. Moreover, we compare the proposed strategies with an allocation strategy named learning with linear rewards. Theoretical- and numerical analysis shows the superior performance of the proposed strategies for efficient task allocation compared to the state-of-the-art methods.
\end{abstract}
\begin{IEEEkeywords}
Distributed task management, equilibrium, fog computing, multi-armed bandit.
\end{IEEEkeywords}
\IEEEpeerreviewmaketitle
\section{Introduction}
\label{Sec:Intro}
\IEEEPARstart{I}{n} the past years, the world has witnessed the explosive demand for excessively resource-consuming wireless services and applications such as online gaming and on-demand streaming. Cloud computing is envisioned as a promising paradigm to handle the intensive computation demands and high energy consumption because it virtually enables an unlimited storage capacity and also processing power \cite{diaz2016state} and has dynamic characteristics such as elasticity, reduced management efforts, and flexible pricing model (pay-per-use) \cite{mouradian2017comprehensive}. However, cloud computing suffers from some shortcomings, the most fundamental ones being delay and excessive bandwidth consumption, both caused by the long (physical or logical) distance between the centralized cloud servers and devices \cite{bittencourt2017mobility}. In addition, the impact of the data centers' power consumption on the environment is a significant concern \cite{hu2015mobile,vaquero2014finding}. 

The \textit{fog computing} paradigm addresses the challenges by extending the cloud architecture towards the network edge. Indeed, it brings the computation in the proximity of end-users by introducing a hierarchy of computing capacities, often decentralized devices, referred to as fog nodes and cloudlets between the edge and the cloud \cite{xiao2018distributed}. The architecture exploits the idle communication and computational resources on the demand side to provide low-latency computing, storage, and network services. Besides, a recent study shows that fog computing is a promising solution to reduce energy consumption on both the mobile device side and cloud server side \cite{chang2017energy}.

To realize the potential of fog computing, efficient allocation of the offloaded tasks to the fog nodes plays a crucial role. Developing a sophisticated online task allocation strategy is, however, challenging. One significant issue is the heterogeneity of fog nodes and tasks. Indeed, in the future ultra-dense networks with ubiquitous connectivity, users offload various tasks, including transmission, sensing, and signal processing. Besides, several powerful end-users, edge nodes, and access devices act as fog nodes. Such nodes are distinct concerning software and hardware; thus, they have different characteristics and capabilities and perform offloaded tasks at various quality and efficiency levels. The challenge becomes aggravated in a distributed structure when avoiding information exchange to retain low signaling and feedback overhead. In addition, another challenge is related to the trade-off between individual interest and social welfare: the fog nodes might be reluctant to share their resources for extra tasks to avoid expensive computation and energy loss unless they receive appropriate compensation \cite{zhou2019resource}. Furthermore, the energy efficiency is also a crucial factor in developing the task allocation algorithm \cite{xiao2018distributed}.

Based on the discussion above, it is imperative to develop task allocation policies that are amenable to distributed implementation and robust to information shortage and uncertainty in the fog computing paradigm. In this work, we address all the above challenges. Specifically, we investigate the task allocation problem in a distributed fog computing architecture. Our contributions are summarized below: 
\begin{itemize}
\item Our system model is generic as we allow for arbitrary heterogeneous fog nodes and quantify their distinct characteristics using a well-designed utility function, which can be regarded as a measure of the quality of experience (QoE) based on task completion efficiency and resources (including power) consumption. Besides, our proposed framework is more realistic as fog nodes do not need any back-and-forth negotiation or disclose information about each other's types, the tasks' utility, and the cost.
\item Taking the selfishness and rationality of the intelligent fog nodes into account, we model the task allocation problem as a sequential decision-making game. We prove that the task allocation game is a social-concave game, which converges to the Nash equilibrium when every player uses a no-regret learning strategy to select tasks. Besides, we prove that the Nash equilibrium in the formulated task allocation game is unique.
\item To deal with the uncertainties in system dynamics and limited information, we reformulate the problem as a game with bandit feedback. Based on \cite{flaxman2004online} and \cite{bubeck2011lipschitz}, we propose two decision-making policies: Bandit gradient ascent with momentum (BGAM) and Lipschitz bandit with initialization (LBWI), and prove their performance characteristics such as regret bound and convergence. 
\item Through intensive numerical analysis, we evaluate the performance of our scheme in comparison to an existing centralized approach and several distributed benchmark methods working based on various principles.
\end{itemize}

The rest of the paper is organized as follows. \textbf{Section~\ref{sec:rel-work}} discusses the related work. \textbf{Section~\ref{sec:sysm}} presents the system model and basic assumptions. In \textbf{Section~\ref{sec:pf}}, we formulate the problem of task sharing among heterogeneous entities under uncertainty and model the formulate task allocation problem as a game. We also discuss the existence and uniqueness of Nash equilibrium. In \textbf{Section~\ref{sec:nrs}}, we develop and analyze two no-regret decision-making strategies that converge to Nash equilibrium: The first one uses the bandit gradient descent (BGD) (\textbf{Section~\ref{sec:bgam}}), while the second one is based on Lipschitz bandit (LB) (\textbf{Section~\ref{sec:lb}}). \textbf{Section~\ref{sec:llr}} describes our benchmark strategy for performance evaluation and comparison. \textbf{Section~\ref{sec:num-ana}} includes numerical analysis and the subsequent discussions. \textbf{Section~\ref{sec:concl}} concludes the paper and suggests some directions for future research.
\section{Related Work} \label{sec:rel-work}
Some existing works solve the allocation problems using centralized optimization methods. \textit{Deng \etal} develop a systematic framework to solve a workload allocation problem in cloud-fog computing. The allocation problem is decomposed into several subproblems \cite{deng2015towards}: (i) workload allocation in fog and cloud servers based on the power consumption and computation delay, and (ii) minimizing communication delay in dispatch between fog and cloud servers. In \cite{chen2016joint}, the offloading decisions and the allocation of communication resources are jointly optimized via separable semidefinite relaxation. Then the work is extended in \cite{chen2018multi} in a complicated scenario where a computing access point (CAP) serving as the network gateway and a computation service provider, is connected with the extended mobile cloud computing. The solution proposed in \cite{chen2016joint} is generalized with an added CAP with additional alternating optimization and sequential tuning.

In the concept of fog computing, intelligent user devices with extra computational resources act voluntarily as fog nodes while simultaneously being concerned with their interests and privacy. Therefore, one cannot force them to share all information and perform all tasks based on the instruction of the centralized controller \cite{zhou2019resource,peng2017data}. Recently, a large body of research has resorted to game-theoretical methods for distributed allocation of generalized resources. \textit{Kaewpuang \etal} propose a decision-making framework in a mobile cloud computing environment, where resource allocation and revenue management are realized via cooperation formation among mobile service providers \cite{kaewpuang2013framework}. They model the cooperation formation among providers as a distributed game and use the best-response dynamics as a solution. A similar algorithm concerning cooperation under uncertainty appears in \cite{Mittal21:DCU}. The method applies to several applications, including fog computing. In \cite{chiti2018matching}, the authors formulate the task offloading problem as a matching game with externalities. A strategy based on the deferred acceptance algorithm is proposed, which enables efficient allocation in a distributed mode and ensures stability over the matching outcome. References \cite{chen2015efficient} and \cite{shah2018hierarchical} model the problem as a potential game that admits a pure strategy Nash equilibrium. The former uses the finite improvement property to achieve Nash equilibrium, whereas the latter proposes a best-response adaption algorithm to achieve Nash equilibrium. Besides, \cite{shah2018hierarchical} extends their framework and proposes a near-optimal solution for resource allocation to address the long convergence time to achieve equilibrium. In \cite{Ranadheera18:COA}, the authors model the task offloading problem as a minority game and obtain the Nash equilibrium. Reference \cite{Habiba19:ARA} uses a reverse auction model for resource allocation and task management in a computation offloading platform as a part of a software-defined network. In \cite{maghsudi2018distributed}, the authors study a task allocation problem among heterogeneous cyber-physical systems under state uncertainty. They use the concept of deterministic equivalence and sequential core to solve the problem and use the Walrasian auction process to implement the core. Besides, in \cite{Maghsudi21:CHD}, the authors propose a supply-demand market model for task allocation, where the fog nodes and offloading user, respectively, represent the sellers that set the service prices and the consumer that uses the services at the given prices. However, most game-theoretical strategies require the information exchange \cite{kaewpuang2013framework, Mittal21:DCU,shah2018hierarchical} among fog nodes about their decisions, and some even allow the back-and-forth negotiation, which does not take the fog nodes' privacy secure into account.

Alongside game theory, online learning is another widely-used mathematical tool to efficiently share or allocate resources under uncertainty. In \cite{gai2012combinatorial}, the authors propose a multi-armed bandit setting to solve the combinatorial user-channel allocation problem and a novel policy learning called Learning with Linear Rewards (LLR) is proposed. Its main step is solving a deterministic combinatorial optimization with a linear objective. Reference \cite{Ghoorchian21:MBE} develops a decision-making method, namely, budget-limited sliding window UCB, based on multi-armed bandit theory to select the most suitable server in a dynamic time-variant network. Reference \cite{liao2019robust} designs a two-stage task offloading approach. In the first stage, the algorithm produces a contract that specifies the contribution and associated reward to encourage the fog servers to share resources. The second stage is a version of the upper confidence bound (UCB) method to connect the user and fog server. In \cite{zhou2019resource}, the authors develop a similar two-stage resource sharing and task offloading approach, which considers not only task offloading under information asymmetry \cite{liao2019robust} and also the scenario without information asymmetry. Besides, the performance of the proposed distance-aware, occurrence-aware, and task-property-aware volatile upper confidence bound algorithm is rigorously analyzed in terms of theoretical regret bound in \cite{zhou2019resource}. In general, the multi-armed bandit is a suitable framework to solve the allocation problem with discrete action space, whereas it is inefficient for the continuous action space. Some other related work considers solving the allocation problem with continuous action space with optimization strategies or machine learning strategies \cite{chen2020joint,wang2020machine}, in which the gradient information is necessary, which is also challenging in practical applications.

Unlike previous works, we merge game theory and online bandit learning into a unified framework to solve the task allocation problem in a distributed manner with limited information. Distributed learning with bandit feedback guarantees the fog nodes' privacy, security, and low communication cost. 
\section{System Model}
\label{sec:sysm}
We consider a fog computing system consisting of $K$ fog nodes (FNs) gathered in the set $\mathcal{K}=\{1,2,\dots,K\}$. Besides, there is a set of $M$ offloaded tasks denoted by $\mathcal{M} = \{1,2,\dots,M\}$. \textbf{Fig.~\ref{fig:rs}} shows an instant of such fog computing system.
\begin{figure}[!htp]
  \vspace{-10pt}
  \centering
  \includegraphics[width=0.6\linewidth]{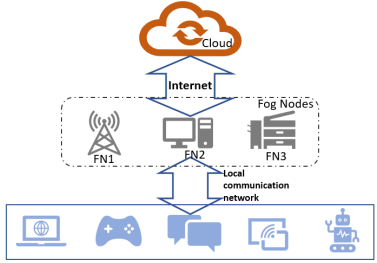}
  \caption{Distributed task allocation among the fog nodes.}
  \label{fig:rs}
\end{figure}

Each fog node decides to which task(s) and at which level it contributes. Naturally, there might remain some tasks that are not appealing to fog nodes (e.g., the task with an excessive energy consumption); Such tasks will be uploaded to the cloud server. Since fog nodes may correspond to the same type of devices and any computational task can be associated with many fog nodes, we assume that the fog nodes can cooperatively accomplish tasks. This assumption is widely used in task allocation in fog computing systems \cite{xiao2018distributed,wang2019learning}. Vector $\boldsymbol{x}_{k}=(x_{k,1}, \dots, x_{k,M})$, $0 \leq x_{k,m}\leq 1$, gathers the fraction of each task $m \in \mathcal{M}$ that fog node $k \in \mathcal{K}$ selects to perform. Fog nodes are different concerning ability and capacity. We quantify such a type heterogeneity using an efficiency index vector $\boldsymbol{\rho}_{k} = (\rho_{k,1}, \dots,\rho_{k,M})$ for every fog node $k \in \mathcal{K}$. $\boldsymbol{\epsilon}_k = (\epsilon_{k,1}, \ldots, \epsilon_{k,M})$ contains the power consumption index vector. Besides, we characterize the cost, e.g., the memory cost, using a cost index vector $\boldsymbol{\kappa}_{k}=(\kappa_{k,1}, \dots,\kappa_{k,M})$. We assume $\forall k \in \mathcal{K}, m \in \mathcal{M}$, $0 < \rho_{\min} \leq \rho_{k,m} \leq \rho_{\max}$, $0 < \epsilon_{\min} \leq \epsilon_{k,m} \leq \epsilon_{\max}$ and $0 < \kappa_{\min} \leq \kappa_{k,m} \leq \kappa_{\max}$.

After each fog node $k\in \mathcal{K}$ announces its preferred task share $\boldsymbol{x}_{k}$, the tasks are allocated among all fog nodes to satisfy their request as far as possible. For the fog node $k$, we denote the allocation vector by $\boldsymbol{a}_{k}=(a_{k,1}, \dots, a_{k,M})$, where
\begin{gather}
a_{k,m}=\frac{x_{k,m}}{\sum_{i \in \mathcal{K}}x_{i,m}}, \label{eq:akm}
\end{gather}
if $\sum_{i \in \mathcal{K}}x_{i,m}>0$, and $a_{k,m} = 0$ otherwise.\footnote{Alternatively, one can use the \textit{barrier to entry} concept, i.e., add a small constant to the denominator \cite{Heliou20:GFO}.} The mechanism described by (\ref{eq:akm}) corresponds to a proportional allocation mechanism, which is an intuitive way to allocate tasks/resource fairly \cite{nguyen2019market}. Under the proportional allocation, every fog node receives a fraction of the task equivalent to its requested proportion divided by the sum of all fog nodes requested proportion. 

We assume that every fog node $k\in \mathcal{K}$ in the multi-agent system acts as a risk-averse player since it is a popular assumption in multi-agent systems \cite{maghsudi2017distributed,ranadheera2017minority}. Therefore, we model the task completion efficiency reward by using a risk-aversion model as \footnote{Some other functions can also be used as the reward function as long as some regularity conditions are satisfied.}
\begin{gather}
\label{eq:UtilityOne}
\varphi_{k,m}(a_{k,m})=\rho_{k,m}(1-e^{-\frac{1}{\rho_{k,m}}a_{k,m}}). 
\end{gather}
By (\ref{eq:UtilityOne}), for a fixed amount $a_{k,m}$, a larger efficiency index $\rho_{k,m}$ results in higher reward. Thus, the fog node attempts to perform more share of tasks in which it is more efficient (higher efficiency index) so that the system efficiency increases \cite{maghsudi2018distributed}. In real applications, the heterogeneity coefficient follows from the computational capability of the devices and the transmission cost to the cloud server. Besides, after submitting its preferred task share, each fog node pre-allocates (reserves) the required resources to handle its requested task share, including power. Concerning the energy efficiency, we consider the power consumption efficiency of fog nodes, similar as \cite{xiao2018distributed,cheng2016statistical}. Let $\epsilon_{k,m}x_{k,m}$ denote the power reservation of fog node $k$ for task $m$. We define the power consumption efficiency as the required fraction of $\epsilon_{k,m}x_{k,m}$ to process the allocated share of task $m$. Formally,
\begin{gather}
e_{k,m} = \frac{\epsilon_{k,m}x_{k,m}}{a_{k,m}}.
\end{gather} 
Similarly, for a fog node $k \in \mathcal{K}$, the cost of resource reservation is proportional to the requested task fraction. Formally,
\begin{gather}
c_{k,m}(x_{k,m})=\kappa_{k,m}x_{k,m}. 
\end{gather}
Therefore, the total utility of the fog node $k \in \mathcal{K}$ yields
\begin{gather}
u_{k}(\boldsymbol{x}_{k},\mathbf{X}_{-k})=\sum_{m \in \mathcal{M}}\rho_{k,m}(1-e^{-\frac{a_{k,m}}{\rho_{k,m}}}) + \frac{\epsilon_{k,m}}{a_{k,m}}- \kappa_{k,m}x_{k,m},
\label{eq:util}
\end{gather} 
where $\boldsymbol{X}_{-k} = (\boldsymbol{x}_1, \dots, \boldsymbol{x}_{k-1}, \boldsymbol{x}_{k+1}, \dots, \boldsymbol{x}_K)$ denotes the joint decision of all fog nodes excluding $k$. To ensure the information privacy, safety, and low communication cost, each fog node can only access its utility information after performing tasks, which can be regarded as a bandit feedback. 

In the following, we state two propositions concerning the utility function defined in (\ref{eq:util}).
\begin{proposition} 
\label{prop:lips}
For every $k \in \mathcal{K}$, if $\sum_{i \in \mathcal{K}}x_{i,m}>0$, the utility function defined in (\ref{eq:util}) satisfies the Lipschitz condition 
\begin{gather*}
|u_{k}(\boldsymbol{x})-u_{k}(\boldsymbol{y})|\leq L \cdot |\boldsymbol{x}-\boldsymbol{y}|, \quad \text{for any}\  \boldsymbol{x}, \boldsymbol{y} \in [0,1]^{M}.
\end{gather*}
\end{proposition}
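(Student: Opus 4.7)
The plan is to prove Lipschitz continuity by decomposing the utility $u_k$ into three families of summands (indexed by $m \in \mathcal{M}$) and showing that each family is Lipschitz, then invoking the fact that a finite sum of Lipschitz functions is Lipschitz with constant at most the sum of the individual constants. Concretely, the three pieces are the risk-averse reward $\rho_{k,m}\bigl(1-e^{-a_{k,m}/\rho_{k,m}}\bigr)$, the power-efficiency term $\epsilon_{k,m}/a_{k,m}$, and the linear cost $\kappa_{k,m}x_{k,m}$. The third is linear with Lipschitz constant at most $\kappa_{\max}$ and needs no further argument.

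For the reward term I would use the chain rule. First I would observe that $\phi(a) := \rho_{k,m}(1 - e^{-a/\rho_{k,m}})$ has derivative $e^{-a/\rho_{k,m}} \in (0,1]$, so $\phi$ is $1$-Lipschitz in $a$. Next I would compute the partial derivatives of the allocation map $a_{k,m}(\boldsymbol{x}) = x_{k,m}/\sum_{i}x_{i,m}$: with respect to $x_{k,m}$ it equals $\bigl(\sum_{i\neq k} x_{i,m}\bigr)/\bigl(\sum_i x_{i,m}\bigr)^2$ and with respect to $x_{j,m}$ ($j\neq k$) it equals $-x_{k,m}/\bigl(\sum_i x_{i,m}\bigr)^2$. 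Under the standing hypothesis $\sum_i x_{i,m} > 0$ and the box constraint $x_{k,m}\in[0,1]$, the map $\boldsymbol{x}\mapsto a_{k,m}$ has bounded gradient, so $\phi\circ a_{k,m}$ is Lipschitz by the composition rule. The power-efficiency term simplifies via the very definition $e_{k,m} = \epsilon_{k,m}x_{k,m}/a_{k,m} = \epsilon_{k,m}\sum_i x_{i,m}$, which is \emph{linear} in $\boldsymbol{x}$ and therefore Lipschitz with constant $K\epsilon_{\max}$; this is the cleanest way to handle the $\epsilon_{k,m}/a_{k,m}$ summand without direct singular analysis.

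Collecting the three bounds, I would read off an explicit $L$ depending only on $M$, $K$, $\rho_{\min}$, $\epsilon_{\max}$, $\kappa_{\max}$, and the (assumed) lower bound on $\sum_i x_{i,m}$. Summing over $m\in\mathcal{M}$ contributes an extra factor of $M$.

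The main obstacle is the singularity at the boundary where some denominator $\sum_i x_{i,m}$ approaches zero: the partial derivatives of $a_{k,m}$ blow up like the inverse square of that sum, so any Lipschitz constant will necessarily depend on a strictly positive lower bound for it. The proposition sidesteps this by conditioning on $\sum_i x_{i,m} > 0$, but to obtain a \emph{uniform} $L$ on all of $[0,1]^M$ one really needs to restrict to a compact sub-domain bounded away from that face (which is consistent with the exploration-based bandit strategies developed later in the paper, as these keep coordinates away from zero). I would state this restriction explicitly when instantiating $L$, rather than trying to prove a stronger global bound.
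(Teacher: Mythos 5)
Your proposal is correct and follows essentially the same route as the paper's proof, which also bounds the gradient $\nabla u_{k,m} = \frac{X_{-k,m}}{(x_{k,m}+X_{-k,m})^2}e^{-x_{k,m}/(\rho_{k,m}(x_{k,m}+X_{-k,m}))} + \epsilon_{k,m} - \kappa_{k,m}$ using $e^{-a_{k,m}/\rho_{k,m}}\leq 1$ and $\frac{X_{-k,m}}{(x_{k,m}+X_{-k,m})^2}\leq\frac{1}{X_{-k,m}}$ (your chain-rule decomposition $\phi\circ a_{k,m}$ and the linearity of the power term $\epsilon_{k,m}\sum_i x_{i,m}$ are the same computation, merely organized differently). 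Your explicit caveat about the blow-up of the gradient as $\sum_i x_{i,m}\to 0$, and the need for a strictly positive lower bound on that sum to get a uniform $L$, is in fact handled more carefully than in the paper, which relegates it to a footnote on the \emph{barrier to entry} regularization of the denominator of $a_{k,m}$.
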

\begin{proof}
See Appendix \ref{Sec:Proposition1}.
\end{proof}
\begin{proposition} 
\label{prop:conc-conv}
For every $k \in \mathcal{K}$, if $\sum_{i \in \mathcal{K}}x_{i,m}>0$, the utility function defined in (\ref{eq:util}) is concave in the proposal of fog node $k$, $\boldsymbol{x}_k$, and convex in the actions of other fog nodes, $\boldsymbol{X}_{-k}$.  
\end{proposition}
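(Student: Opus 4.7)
The plan is to exploit the separability of $u_k$ across tasks and the fact that its dependence on the other fog nodes' actions is only through a single scalar per task, reducing the proposition to one-dimensional second-derivative computations in each case.

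First, I would write $u_k = \sum_{m\in\mathcal{M}} v_{k,m}$ with
\[
v_{k,m} = \rho_{k,m}\bigl(1 - e^{-a_{k,m}/\rho_{k,m}}\bigr) + \frac{\epsilon_{k,m}}{a_{k,m}} - \kappa_{k,m}\,x_{k,m},
\]
where $a_{k,m} = x_{k,m}/(x_{k,m}+c_m)$ and $c_m := \sum_{j\ne k} x_{j,m}$. Because $v_{k,m}$ depends on $\boldsymbol{x}_k$ only through the single coordinate $x_{k,m}$, the Hessian of $u_k$ with respect to $\boldsymbol{x}_k$ is diagonal, and concavity in $\boldsymbol{x}_k$ reduces to $\partial^2 v_{k,m}/\partial x_{k,m}^2 \le 0$. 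For the other direction, $u_k$ depends on $\boldsymbol{X}_{-k}$ only through the scalars $\{c_m\}$, which are linear (hence affine) in $\boldsymbol{X}_{-k}$; since convexity is preserved by affine pre-composition, it suffices to verify $\partial^2 v_{k,m}/\partial c_m^2 \ge 0$.

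For concavity in $x_{k,m}$ with $c_m$ fixed: the map $x \mapsto a(x) = x/(x+c)$ is concave and increasing on $\{x>0\}$, and $h(a) = \rho(1-e^{-a/\rho})$ is concave and increasing, so $h\circ a$ is concave by the standard composition rule. The term $-\kappa_{k,m}x_{k,m}$ is linear. The remaining term $\epsilon_{k,m}/a_{k,m}$ I would handle by substituting $1/a = 1 + c/x$ and differentiating twice in $x$, which exposes the sign directly. For convexity in $c_m$ with $x_{k,m}$ fixed: $a(c) = x/(x+c)$ is convex and decreasing, the term $\epsilon_{k,m}/a_{k,m} = \epsilon_{k,m}(1 + c/x)$ is affine in $c$ (hence both convex and concave), and the composite reward is treated via the chain rule
\[
\frac{d^2}{dc^2} h(a(c)) = h''(a)\,\bigl(a'(c)\bigr)^2 + h'(a)\,a''(c).
\]
Factoring out the common positive prefactor $e^{-a/\rho}\,x/(x+c)^3$ reduces the sign of this expression to that of $2 - a_{k,m}/\rho_{k,m}$.

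The main obstacle I anticipate is precisely this last sign check: verifying $2 - a_{k,m}/\rho_{k,m} \ge 0$ on the whole feasible set. Since $a_{k,m}\in[0,1]$, this follows either from a uniform lower bound on $\rho_{k,m}$ (consistent with the heterogeneity setup in Section~\ref{sec:sysm}) or from the natural operating condition $a_{k,m}\le 2\rho_{k,m}$, which can be imposed without loss of generality given the calibration of the efficiency index. Once this is settled, summing the per-task concavity and convexity bounds over $m\in\mathcal{M}$ and re-expanding $c_m$ as an affine function of $\boldsymbol{X}_{-k}$ yields the full statement.
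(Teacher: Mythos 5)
Your overall route --- a diagonal Hessian over tasks for concavity in $\boldsymbol{x}_k$, and reduction of convexity in $\boldsymbol{X}_{-k}$ to a scalar second derivative in $c_m=\sum_{j\neq k}x_{j,m}$ via affine pre-composition --- is essentially the paper's proof, and your sign condition $2-a_{k,m}/\rho_{k,m}\geq 0$ is exactly the condition the paper needs: since $a_{k,m}\leq 1$, the appendix simply assumes $\rho_{\min}>0.5$ (an assumption that appears only there, not in \textbf{Section~\ref{sec:sysm}}) so that its numerator $2x_{k,m}X_{-k,m}+(2\rho_{k,m}-1)x_{k,m}^2$ is nonnegative. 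On the reward term your argument is sound and arguably cleaner: the concave-increasing-composed-with-concave rule replaces the paper's explicit Hessian entry for the $\boldsymbol{x}_k$ direction, and your chain-rule computation for the $c_m$ direction reproduces the paper's expression up to a factor of $\rho_{k,m}$ in one term.

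The genuine gap is the power term. If you read $\epsilon_{k,m}/a_{k,m}=\epsilon_{k,m}(1+c_m/x_{k,m})$ literally, its second derivative in $x_{k,m}$ is $2\epsilon_{k,m}c_m/x_{k,m}^3>0$: the term is strictly convex in $x_{k,m}$ and its curvature blows up as $x_{k,m}\to 0^+$, while the curvature of the reward term stays bounded there, so the sum cannot be concave near $x_{k,m}=0$ and the proposition would fail as you have set it up. ``Differentiating twice in $x$'' does expose the sign directly, but it is the wrong sign for concavity, and your write-up does not confront this. The paper escapes because the power contribution is in fact affine: by the definition $e_{k,m}=\epsilon_{k,m}x_{k,m}/a_{k,m}=\epsilon_{k,m}\sum_{i\in\mathcal{K}}x_{i,m}$, and consistently with \eqref{eq:pro-form} and the game definition $u_k=\varphi_k+\epsilon_k\boldsymbol{X}_{-k}-\kappa_k\boldsymbol{x}_k$, it is linear in $(x_{k,m},X_{-k,m})$ and drops out of both Hessians; the $1/a_{k,m}$ in \eqref{eq:util} is a notational slip, and the paper's own Hessian entries contain no $\epsilon$ term. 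You need to either adopt that affine reading or supply a domination argument, and no uniform domination exists. Separately, your fallback ``operating condition $a_{k,m}\leq 2\rho_{k,m}$'' cannot be imposed without loss of generality, since $a_{k,m}$ is determined by all players' actions through \eqref{eq:akm}; the clean fix is the uniform lower bound $\rho_{\min}>1/2$, which is what the paper silently uses.
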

\begin{proof}
See Appendix \ref{Sec:Proposition2}.
\end{proof}
\section{Problem Formulation} 
\label{sec:pf}
The task management procedure follows in successive rounds $t=1,2,\dots,T$; to show that explicitly, we add the superscript $t$ to the notation. Similar to many previous work such as \cite{maghsudi2014joint,xu2013opportunistic}, we use the zero-average Gaussian noise to model the random factors and uncertainties in the fog computing system and add a noise variable to the utility of each fog node. Formally, at each round $t$, by proposing $\boldsymbol{x}_{k}^{t}$, the total utility of fog node $k \in \mathcal{K}$ yields $u_{k}^{t}(\boldsymbol{x}_{k}^{t},\mathbf{X}_{-k}^{t})+N_{0}$, where $N_{0}$ is a zero-average Gaussian noise; that is, the fog node observes a noisy version of the average utility resulted from $(\boldsymbol{x}_{k}^{t},\mathbf{X}_{-k}^{t})$ in (\ref{eq:util}). 

Primarily, each selfish fog node $k \in \mathcal{K}$ opts to maximize its accumulated utility
\begin{align}
\underset{\boldsymbol{x}_k \in [0,1]^M}{\textup{maximize}}
\sum_{t=1}^T \sum_{m=1}^M & \rho_{k,m}(1 - e^{-\frac{x_{k,m}^t}{\rho_{k,m}\sum_{i \in \mathcal{K}}x_{i,m}^t}}) \notag \\
& + \epsilon_{k,m}\sum_{i \in \mathcal{K}}x_{i,m}^t- \kappa_{k,m}x_{k,m}^t  \label{eq:pro-form}.
\end{align} 

Solving optimization problem \eqref{eq:pro-form} is extremely difficult as (i) the fog nodes do not have any prior information about the tasks and the utility functions, (ii) the utility of each fog node depends on the action of the other nodes, and (iii) after each round of decision-making, each fog node only observes the utility of the performed action and receives no other feedback.

The scenario described above corresponds to a \textit{bandit setting} \cite{Maghsudi16:MAB}, where the agents adjust their actions based on the estimation from feedback information. In this setting, the performance metric follows the conventional concept of expected total regret. We define the cumulative regret of fog node $k$ up to time $T$ as
\begin{gather}
R_{k}^{T} = \sum_{t=1}^{T} \mathbb{E}[\max_{\boldsymbol{x}_k} u_{k}^{t}(\boldsymbol{x}_{k},\boldsymbol{X}_{-k}^{t})- u_{k}^{t}(\boldsymbol{x}_{k}^{t},\boldsymbol{X}_{-k}^{t})].
\label{eq:Reg}
\end{gather} 
Then, instead of solving (\ref{eq:pro-form}), each fog node minimizes its accumulated regret $R_{k}^{T}$. Regret minimization procedure enables every agent to have a guarantee on its utility regardless of the actions of others \cite{even2009convergence}.
%
\subsection{Task Allocation Game} 
\label{sec:tag}
From a distributed system perspective, the fog nodes must achieve a steady-state. By regarding fog nodes as the players and their task selection as action, we solve the problem as a socially concave game with bandit feedback.

We define the described task allocation game $\Gamma$ by a tuple $\prec \mathcal{K},\{\mathcal{S}_k\}_{k\in \mathcal{K}}, \{\boldsymbol{x}_{k}\}_{k\in \mathcal{K}}, \{\varphi_{k}\}_{k\in \mathcal{K}}, \{u_{k}\}_{k\in \mathcal{K}}\succ$, where: (i) $\mathcal{K}$ is the set of players, implying that each fog node is a player; (ii) $\mathcal{S}_{k} \in [0,1]^M$ is player $k$'s strategy space; (iii) $\boldsymbol{x}_{k} \in [0,1]^M$ is player $k$'s action; (iv) $\varphi_{k}: [0,1]^M \rightarrow \mathbb{R}$ is the value function of player $k$ assigning a value/reward function to allocation; (v) $u_{k}(\boldsymbol{x}_{k},\boldsymbol{X}_{-k}): \otimes_{k \in \mathcal{K}}\mathcal{S}_{k} \rightarrow \mathbb{R}$ is the utility function of player $k$, defined as $u_{k}(\boldsymbol{x}_{k},\boldsymbol{X}_{-k}) = \varphi_{k}(\boldsymbol{a}_{k}(\boldsymbol{x}_{k},\boldsymbol{X}_{-k})) + \epsilon_{k}\boldsymbol{X}_{-k} - \kappa_{k} \boldsymbol{x}_{k}$. In the formulated task allocation game, each fog node (player) selects some tasks to perform in consecutive rounds to minimize the cumulative regret defined in \eqref{eq:Reg}. Before proceeding to analyze the task allocation game, we present some notions and concepts.
\begin{definition}[Nash Equilibrium \cite{even2009convergence}] 
\label{def:NashEq}
A strategy profile $\boldsymbol{X}^{*} = (\boldsymbol{x}_{1}^{*}, \ldots, \boldsymbol{x}_{k}^{*}, \ldots, \boldsymbol{x}_{K}^{*})$ is called Nash equilibrium if for all $k \in \mathcal{K}$ and all actions $\boldsymbol{x}'_{k}$,  strategy profile $\boldsymbol{X} = (\boldsymbol{x}_{1}^{*}, \ldots, \boldsymbol{x}'_{k}, \ldots, \boldsymbol{x}_{K}^{*})$ yields
\begin{gather}
u_k(\boldsymbol{x}_{k}^{*},\boldsymbol{X}_{-k}^{*}) \geq u_k(\boldsymbol{x}'_{k},\boldsymbol{X}_{-k}^{*}).
\end{gather}
A strategy profile $\boldsymbol{X}^{*}$ is called $\epsilon$-Nash equilibrium if for all $k \in \mathcal{K}$ and all actions $\boldsymbol{x}'_{k}$, strategy profile $\boldsymbol{X}$ yields
\begin{gather}
u_k(\boldsymbol{x}_{k}^{*},\boldsymbol{X}_{-k}^{*}) \geq u_k(\boldsymbol{x}'_{k},\boldsymbol{X}_{-k}^{*}) - \epsilon.
\end{gather}
\end{definition}
\begin{definition}[Concave Game \cite{Ui98:CCG}] 
\label{def:concv}
A game $\Gamma$ is concave if for all $k \in \mathcal{K}$, $u_k(\boldsymbol{x}_k)$ is a concave function in $\boldsymbol{x}_k$ for every fixed $\boldsymbol{X}_{-k} \in \mathcal{S}_{-k}$.  
\end{definition}
\begin{definition}[Socially Concave Game \cite{even2009convergence}]
\label{def:sc-concv}
A game $\Gamma$ is socially concave if 
\begin{itemize}
\item[(i)]  $\exists~\{\lambda_{k}\}_{k \in \mathcal{K}}$, $\lambda_{k} > 0$, $\sum_{k \in \mathcal{K}}\lambda_k=1$, such that the weighted sum of utility functions $\sigma(\boldsymbol{X},\boldsymbol{\lambda}) = \sum_{k=1}^K \lambda_k u_k(\boldsymbol{x}_k)$ is a concave function in every $\boldsymbol{x}_{k}$, where $\boldsymbol{X} = (\boldsymbol{x}_{1}, \ldots, \boldsymbol{x}_{k}, \ldots, \boldsymbol{x}_{K})$ is the joint strategy profile.
\item[(ii)]  The utility function of each player $k \in \mathcal{K}$ is convex in the actions of the other players, i.e., for every $\boldsymbol{x}_{k} \in \mathcal{S}_{k}$, the function $u_{k}(\boldsymbol{x}_{k},\boldsymbol{X}_{-k})$ is convex in $\boldsymbol{X}_{-k} \in \mathcal{S}_{-k}$.
\end{itemize}
\end{definition}
The following propositions describe some properties of the task allocation game $\Gamma$ concerning equilibrium.
\begin{proposition} 
\label{prop:ta-sc}
The task allocation game $\Gamma$ by a tuple $\prec \mathcal{K},\{\mathcal{S}_k\}_{k\in \mathcal{K}}, \{\boldsymbol{x}_{k}\}_{k\in \mathcal{K}}, \{\varphi_{k}\}_{k\in \mathcal{K}}, \{u_{k}\}_{k\in \mathcal{K}}\succ$ is a socially concave game and a concave game.
\end{proposition}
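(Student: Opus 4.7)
The plan is to split the proposition into its three assertions — the concave-game property and the two clauses (i) and (ii) of Definition~\ref{def:sc-concv} — then discharge the two that follow immediately from Proposition~\ref{prop:conc-conv} and reserve the heavy lifting for the one that does not.

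Two of the three assertions are essentially free. The concave-game clause, Definition~\ref{def:concv}, is exactly the statement that $u_k$ is concave in $\boldsymbol{x}_k$ for every fixed $\boldsymbol{X}_{-k}$, which is the first part of Proposition~\ref{prop:conc-conv}. Condition (ii) of Definition~\ref{def:sc-concv}, requiring $u_k$ to be convex in $\boldsymbol{X}_{-k}$ for every fixed $\boldsymbol{x}_k$, is the second part of Proposition~\ref{prop:conc-conv}. Neither requires any new calculation.

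The substantive task is condition (i): exhibit positive weights $\{\lambda_k\}_{k\in\mathcal{K}}$ with $\sum_k\lambda_k=1$ such that $\sigma(\boldsymbol{X},\boldsymbol{\lambda})=\sum_k\lambda_k u_k(\boldsymbol{x}_k,\boldsymbol{X}_{-k})$ is concave in each $\boldsymbol{x}_k$. I would try the symmetric choice $\lambda_k = 1/K$ first, and use the decomposition in clause~(v) of the game definition, $u_k = \varphi_k(\boldsymbol{a}_k) + \epsilon_k\boldsymbol{X}_{-k} - \kappa_k\boldsymbol{x}_k$, to split $\sigma$ into an aggregate efficiency reward $\Phi(\boldsymbol{X}) = \sum_{j,m}\varphi_{j,m}(a_{j,m})$ with $a_{j,m}=x_{j,m}/\sum_i x_{i,m}$, plus pieces that are affine in each $\boldsymbol{x}_k$. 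The affine pieces are trivially concave, so condition (i) reduces to concavity of $\Phi$ in each $\boldsymbol{x}_k$.

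This last reduction is the main obstacle. The subtlety is that for $j\neq k$ the map $x_{k,m}\mapsto a_{j,m}$ is convex while $\varphi_{j,m}$ is concave and increasing, so the composition $\varphi_{j,m}(a_{j,m})$ has indefinite curvature in $x_{k,m}$ and Proposition~\ref{prop:conc-conv} alone does not close the gap. I would resolve this by a direct second-derivative computation: writing $S_m = \sum_i x_{i,m}$, one has $\partial a_{j,m}/\partial x_{k,m} = (\delta_{jk}-a_{j,m})/S_m$ and $\partial^2 a_{j,m}/\partial x_{k,m}^2 = -2(\delta_{jk}-a_{j,m})/S_m^2$, together with $\varphi'_{j,m}=e^{-a_{j,m}/\rho_{j,m}}$ and $\varphi''_{j,m}=-\varphi'_{j,m}/\rho_{j,m}$. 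Substituting into $\partial^2\Phi/\partial x_{k,m}^2$, using $\sum_j a_{j,m}=1$ and the uniform bounds $\rho_{\min}\le\rho_{j,m}\le\rho_{\max}$ from Section~\ref{sec:sysm}, yields a signed estimate; if the symmetric weights do not suffice in some parameter regime, I would rescale $\lambda_k$ (for instance proportional to $1/\rho_{k,\max}$) so that the own-curvature term of player $k$ dominates the convex cross-player contributions, completing the proof.
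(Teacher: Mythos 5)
Your split into three assertions is sound, and your treatment of the concave-game property and of condition (ii) of Definition~\ref{def:sc-concv} matches the paper: both are read off from Proposition~\ref{prop:conc-conv} (the paper additionally routes the concave-game claim through twice-differentiability and Lemma~\ref{lem:cc}). The divergence is at condition (i). The paper's proof in Appendix~\ref{Sec:Proposition3} performs no computation there at all --- it simply asserts that concavity of $u_k$ in $\boldsymbol{x}_k$ together with convexity in $\boldsymbol{X}_{-k}$ ``satisfies the two conditions'' of Definition~\ref{def:sc-concv}. You are right that this is not automatic: in the weighted sum $\sigma=\sum_j\lambda_j u_j$, every cross-term $u_j$ with $j\neq k$ depends on $\boldsymbol{x}_k$ through $\boldsymbol{X}_{-j}$ and is \emph{convex} in it, so condition (i) does not follow from Proposition~\ref{prop:conc-conv} under the nontrivial reading of the definition (concavity of the weighted sum, as in the cited source), and your identification of this as the substantive step is more honest than the paper's one-liner.

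The gap is that your proposal opens this problem but does not close it. The argument ends with ``yields a signed estimate; if the symmetric weights do not suffice \dots I would rescale $\lambda_k$,'' and neither branch is verified. Carrying out the very computation you set up shows the fallback cannot succeed for the utility in \eqref{eq:util}. With $S_m=\sum_i x_{i,m}$ one gets, for $j\neq k$, $\partial^2\varphi_{j,m}(a_{j,m})/\partial x_{k,m}^2=e^{-a_{j,m}/\rho_{j,m}}\frac{x_{j,m}}{S_m^3}\bigl(2-\frac{a_{j,m}}{\rho_{j,m}}\bigr)>0$, while the own term is $e^{-a_{k,m}/\rho_{k,m}}\frac{x_{j,m}}{S_m^3}\bigl(-2-\frac{a_{j,m}}{\rho_{k,m}}\bigr)$ in the two-player case. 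At $x_{k,m}=1$, $x_{j,m}=0.1$, $\rho_{k,m}=\rho_{j,m}=0.9$ the convex cross-contribution outweighs the concave own-contribution by more than a factor of two, so concavity of $\sigma$ in $\boldsymbol{x}_k$ forces $\lambda_j/\lambda_k\lesssim 0.45$; exchanging the roles of $j$ and $k$ forces $\lambda_k/\lambda_j\lesssim 0.45$, and the two constraints are incompatible. (The Even-Dar-style weight trick that makes proportional-allocation games socially concave relies on the value function being \emph{linear} in the allocated share, which \eqref{eq:UtilityOne} is not.) So the central step of your plan does not go through as described --- although, to be fair, this is precisely the step the paper's own proof never confronts.
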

\begin{proof}
See Appendix \ref{Sec:Proposition3}.
\end{proof}
%
\begin{proposition}
\label{prop:ta-ue}
If the task allocation game $\Gamma$ converges to a Nash equilibrium, then that equilibrium is unique. 
\end{proposition}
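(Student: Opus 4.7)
The plan is to argue by contradiction and invoke the structural properties already established in Proposition~\ref{prop:conc-conv} and Proposition~\ref{prop:ta-sc}. Suppose $\boldsymbol{X}^{*}$ and $\boldsymbol{Y}^{*}$ are two distinct Nash equilibria. Because each $u_{k}$ is concave in $\boldsymbol{x}_{k}$ on the compact convex set $[0,1]^{M}$, Definition~\ref{def:NashEq} is equivalent to the first-order variational inequality
\begin{equation*}
\langle \nabla_{\boldsymbol{x}_k} u_k(\boldsymbol{x}_k^{*},\boldsymbol{X}_{-k}^{*}),\, \boldsymbol{x}_k-\boldsymbol{x}_k^{*} \rangle \leq 0
\end{equation*}
for all feasible $\boldsymbol{x}_{k}$, and symmetrically at $\boldsymbol{Y}^{*}$. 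Choosing $\boldsymbol{x}_{k}=\boldsymbol{y}_{k}^{*}$ in the first and $\boldsymbol{x}_{k}=\boldsymbol{x}_{k}^{*}$ in the second, weighting by the positive socially-concave coefficients $\lambda_{k}$ from Definition~\ref{def:sc-concv}, and summing over $k\in\mathcal{K}$ yields the aggregated bound $(\boldsymbol{X}^{*}-\boldsymbol{Y}^{*})^{\top}\bigl(g(\boldsymbol{X}^{*})-g(\boldsymbol{Y}^{*})\bigr)\geq 0$, where $g(\boldsymbol{X})$ stacks the weighted player gradients $\lambda_{k}\nabla_{\boldsymbol{x}_{k}}u_{k}$.

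The second step is to show that $g$ is strictly monotone in the opposite direction, i.e., $(\boldsymbol{X}-\boldsymbol{Y})^{\top}(g(\boldsymbol{X})-g(\boldsymbol{Y}))<0$ whenever $\boldsymbol{X}\neq\boldsymbol{Y}$; this is precisely Rosen's diagonal strict concavity condition and by itself forces the Nash equilibrium to be unique. I would verify it by computing the symmetric part of the Jacobian of $g$ and demonstrating that it is negative definite on the relative interior of the joint strategy space. Strict concavity of the reward term $\rho_{k,m}(1-e^{-a_{k,m}/\rho_{k,m}})$ in $a_{k,m}$, whose second derivative equals $-\rho_{k,m}^{-1}e^{-a_{k,m}/\rho_{k,m}}<0$, supplies strict definiteness in the diagonal directions, while the concave-in-own and convex-in-others block structure from Proposition~\ref{prop:conc-conv} controls the off-diagonal blocks after weighting by $\lambda_{k}$. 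Combining strict monotonicity with the aggregated bound forces $\boldsymbol{X}^{*}=\boldsymbol{Y}^{*}$, contradicting the assumed distinctness.

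The principal obstacle will be verifying strict monotonicity for this specific utility, because the allocation map $a_{k,m}=x_{k,m}/\sum_{i}x_{i,m}$ couples all players nonlinearly and is neither concave nor convex in the joint action profile. The linear components $-\kappa_{k,m}x_{k,m}$ and $\epsilon_{k,m}\sum_{i}x_{i,m}$ drop out of the Hessian's symmetric part and are harmless, so the real work is a careful second-derivative computation for the exponential-reward block, with extra attention along the boundary where some $\sum_{i}x_{i,m}$ approaches zero; there I would either restrict to the active region guaranteed by the hypothesis of Proposition~\ref{prop:conc-conv} or employ a limiting argument to sidestep the singularity.
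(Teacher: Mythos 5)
Your overall route is the same as the paper's: both arguments reduce uniqueness to Rosen's diagonal strict concavity (DSC) of the weighted sum $\sigma(\boldsymbol{X},\boldsymbol{r})=\sum_k r_k u_k(\boldsymbol{X})$. The only structural difference is that you inline the proof of Rosen's theorem (pairing the two variational inequalities at the two putative equilibria and summing with positive weights to get $(\boldsymbol{X}^{*}-\boldsymbol{Y}^{*})^{\top}(g(\boldsymbol{X}^{*})-g(\boldsymbol{Y}^{*}))\geq 0$), whereas the paper simply cites Rosen's uniqueness theorem once DSC is in hand. That first half of your argument is correct and standard.

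The genuine gap is that the crux --- verifying DSC for this specific utility --- is announced but never executed, and the heuristic you offer in its place does not work. Concavity in one's own action plus convexity in the others' actions (social concavity) does \emph{not} imply that the symmetric part of the pseudogradient's Jacobian is negative definite; DSC is a strictly stronger condition, so saying the ``block structure controls the off-diagonal blocks'' is not an argument. You correctly identify that the coupling through $\sum_i x_{i,m}$ is the obstacle, but you leave it as an obstacle. The paper's own verification avoids the full Jacobian computation entirely: it observes that $\nabla_k u_k$ decomposes coordinatewise over $m$, that $\partial^2 u_{k,m}/\partial x_{k,m}^2<0$ on $[0,1]$, and then applies the one-dimensional mean value theorem to each coordinate to get $r_k(x_{k,m}^1-x_{k,m}^0)(\nabla u_{k,m}(x_{k,m}^1)-\nabla u_{k,m}(x_{k,m}^0))<0$, summing over $k$ and $m$. (That argument implicitly holds $X_{-k,m}$ fixed when it invokes the mean value theorem in $x_{k,m}$ alone, which is precisely the cross-coupling subtlety you flag; so your instinct about where the difficulty lies is sound, but to count as a proof you would need to actually carry out the second-derivative/monotonicity computation rather than defer it.)
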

\begin{proof}
See Appendix \ref{Sec:Proposition4}.
\end{proof}
\section{Efficient Equilibrium Implementation via No-Regret Bandit Strategy} \label{sec:nrs}
In the task allocation procedure, each fog node $k \in \mathcal{K}$ aims at minimizing its regret. Furthermore, as discussed in \textbf{Section \ref{sec:tag}}, from a system perspective, the fog nodes' interactions must converge to a steady-state or equilibrium. To solve the formulated game, we propose two bandit learning strategies based on the \textit{bandit gradient descent} (BGD) algorithm \cite{flaxman2004online} and \textit{Lipschitz Bandit} (LB) algorithm \cite{bubeck2011lipschitz}. In the following, we describe the developed decision-making policies and establish a regret bound. More precisely, we prove that both proposed strategies are no-regret, meaning that they guarantee sub-linear regret growth.
\textbf{Table~\ref{tab:nota}} summarizes the frequently-used notations of this section.

Specially, by (\ref{eq:util}), the utility of every fog node $k \in \mathcal{K}$ is additive over all tasks $m \in \mathcal{M}$. Besides, the fog node selects the fraction of tasks to perform independently. Therefore, the multi-task allocation problem boils down to $M$ independent single-task allocation problems with the same set of participants. In the following, we consider the single agent single task selection strategy. In numerical implementations, the strategy of each agent runs in parallel.
\begin{table}[!ht]
\begin{center}
\centering
\captionsetup{justification=centering}
\caption{Notation}
\label{tab:nota}
 \begin{tabular}{||c|p{5.7cm}||}
 \hline
 \multicolumn{2}{||c||}{BGAM for node $k$ and task $m$} \\ 
 \hline
 Notation & Meaning  \\ [0.5ex] 
 \hline
 $y^t$ & Approximation of shifted action \\
 $z^t$ & Shifted action \\
 $u^t$ & Utility function with shifted action \\
 $x_{k,m}^t$ & Action of node $k$ in task $m$  \\
 \hline
 \multicolumn{2}{||c||}{LBWI for node $k$ and task $m$} \\ 
 \hline
 Notation & Meaning  \\ [0.5ex] 
 \hline
 $N$ & The number of intervals in Phase I  \\ 
 $\hat{\mu}^t \in \mathbb{R}^N$ & The average utility matrix \\ 
 $\mathcal{A}^t \in \mathbb{R}^N$ & The matrix for counting  \\
 $\mathcal{A}$ & Times performed in each interval in Phase I\\
 $\Omega^t \in \mathbb{R}^N$ & The weight matrix in Phase I \\
 $T_1$ & Number of rounds in Phase I, $T_1 = \mathcal{A}N$ \\
 $\tilde{L}$ & The estimation of Lipschitz constant  \\
 $\tilde{N}$ & The number of intervals in Phase II\\ 
 $\omega^t \in \mathbb{R}^N$ & The weight matrix in Phase II \\
 $T$ & Number of rounds in Phase II \\[1ex] 
 \hline
 \end{tabular}
 \end{center}
 \vspace{-20pt}
\end{table}
\subsection{Bandit Gradient Ascent with Momentum}
\label{sec:bgam}
The BGD strategy \cite{flaxman2004online} solves the convex optimization problem with bandit feedback based on the algorithm proposed by Zinkevich \cite{zinkevich2003online} via the one-point approximation of the gradient. Our proposed method differs from the BGD strategy in two crucial aspects: (i) we adapt it to gradient ascent, and (ii) we add momentum in the update rule to accelerate the convergence. 

To better adjust to BGD policy, we shift the action space $[0,1]$ to $[-\xi,\xi]$, with $\xi$ being a positive constant. Formally, let $y^t$ be the running parameter and $z^{t}$ denote the shifted action 
\begin{gather}
z^{t} = y^{t} + \sigma c^t, 
\label{eq:z^t}
\end{gather}
with $c^t$ being a unit vector selected uniformly at random. As $z^{t}$ is the shifted action, we have
\begin{gather}
x_{k,m}^{t}=z^{t}+\xi. 
\label{eq:action}  
\end{gather}
Besides, we define the function $u^{t}(z^{t})$, which is the utility function with the shifted action space; that is, $u^{t}(z^{t})=u_{k,m}^{t}(z^{t}+\xi)= u_{k,m}^{t}(x_{k,m}^{t})$. Let \cite{flaxman2004online} 
\begin{gather}
\hat{u}^{t}(y^{t}) = \mathbb{E}_{c^t \in \mathbb{B}}[u^{t}(y^{t}+\sigma c^{t})] = \mathbb{E}_{c^t \in \mathbb{B}}[u^{t}(z^{t})], 
\label{eq:est-z}
\end{gather}
where $\sigma$ is a positive constant, and $\mathbb{B} = \{x \in \mathcal{R}||x| \leq 1\}$ is a unit ball centered around the origin in action space. Thus $y^t$ can be interpreted as the approximation of the shifted action. 
\begin{lemma}[\cite{flaxman2004online}]
\label{lem:apxg} 
Fix $\sigma > 0$, over the random unit vectors c,
\begin{gather}
\mathbb{E}[u(y+\sigma c)c] = \sigma\nabla \hat{u}(y).
\end{gather}
\end{lemma}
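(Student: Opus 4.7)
The plan is to verify this identity by direct calculation, exploiting the fact that in the algorithm of Section~\ref{sec:bgam} each per-task action is a scalar (the multi-task problem has been decomposed into $M$ independent single-task subproblems), so $\mathbb{B}$ is the interval $[-1,1]$ and a ``uniform unit vector'' means $c \in \{-1, +1\}$ with equal probability. The right-hand side is then just $\sigma \hat{u}'(y)$, and both sides become elementary expressions in $u(y\pm\sigma)$.

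First I would substitute the definition of $\hat{u}$ from \eqref{eq:est-z} and perform the change of variables $w = y + \sigma v$, obtaining
\begin{equation*}
\hat{u}(y) \;=\; \mathbb{E}_{v \in \mathbb{B}}[u(y+\sigma v)] \;=\; \tfrac{1}{2}\int_{-1}^{1} u(y+\sigma v)\,dv \;=\; \frac{1}{2\sigma}\int_{y-\sigma}^{y+\sigma} u(w)\,dw .
\end{equation*}
The fundamental theorem of calculus then yields
\begin{equation*}
\nabla \hat{u}(y) \;=\; \hat{u}'(y) \;=\; \frac{u(y+\sigma) - u(y-\sigma)}{2\sigma} .
\end{equation*}
On the other hand, expanding the expectation over $c$ directly gives
\begin{equation*}
\mathbb{E}[u(y+\sigma c)\,c] \;=\; \tfrac{1}{2}\,u(y+\sigma) - \tfrac{1}{2}\,u(y-\sigma),
\end{equation*}
which equals $\sigma\, \nabla \hat{u}(y)$ by the previous display, closing the proof.

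There is essentially no obstacle: the regularity needed to differentiate under the integral follows at once from the Lipschitz continuity of $u$ established in Proposition~\ref{prop:lips}, and the identity is a one-line consequence of the fundamental theorem of calculus. If one ever wanted the general $d$-dimensional version (which would produce an extra factor $1/d$), the same idea goes through via Stokes' theorem: differentiating under the integral expresses $\nabla \hat{u}(y)$ as a volume integral of $\nabla u(y+\cdot)$ over $\sigma \mathbb{B}$, which by the divergence theorem becomes a surface integral of $u(y+\cdot)$ against the outward normal $w/\sigma$ on $\partial(\sigma \mathbb{B})$; the identity $\mathrm{Area}(\partial \mathbb{B}) = d\cdot \mathrm{Vol}(\mathbb{B})$ then rewrites this surface integral as an expectation over a uniformly random unit vector, and the scalar case recovers the present statement exactly.
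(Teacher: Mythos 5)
Your proof is correct. Note, however, that the paper does not prove this lemma at all: it is imported verbatim (as Lemma~2.1) from \cite{flaxman2004online}, so there is no in-paper argument to compare against. Your scalar computation is exactly the right specialization for this paper's setting, since the multi-task problem is decomposed into independent single-task subproblems with one-dimensional actions, $\mathbb{B}=[-1,1]$, and the ``random unit vector'' $c$ is uniform on $\{-1,+1\}$; the identity then reduces, as you show, to
\begin{equation*}
\mathbb{E}[u(y+\sigma c)\,c]=\tfrac{1}{2}\bigl(u(y+\sigma)-u(y-\sigma)\bigr)=\sigma\,\hat{u}'(y),
\end{equation*}
by the fundamental theorem of calculus applied to $\hat{u}(y)=\frac{1}{2\sigma}\int_{y-\sigma}^{y+\sigma}u(w)\,dw$ (continuity of $u$, a fortiori its Lipschitz continuity from Proposition~\ref{prop:lips}, suffices; indeed $\hat{u}$ is differentiable even where $u$ is not, which is the point of the smoothing). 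Your remark about the missing $1/d$ factor is also apt and worth making explicit: the general statement in \cite{flaxman2004online} reads $\mathbb{E}[u(y+\sigma c)c]=\frac{\sigma}{d}\nabla\hat{u}(y)$, and the form quoted in the paper is only literally correct because $d=1$ here; your Stokes/divergence-theorem sketch, with $\mathrm{Area}(\partial\mathbb{B})=d\cdot\mathrm{Vol}(\mathbb{B})$, is the standard route to that general version. In short, you have supplied a correct, self-contained proof of a result the paper merely cites.
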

\noindent According to \textbf{Lemma~\ref{lem:apxg}}, in the algorithm BGAM, we can use
\begin{gather}
\sigma \nabla \hat{u}^{t}(y^{t})= u^{t}(z^{t})c^{t} =  u_{k,m}^{t}(x_{k,m}^{t})c^{t} 
\label{eq:egrad}
\end{gather}
to estimate the gradient.

As mentioned previously, we further enhance the algorithm with momentum, which is a trick in gradient-based optimization algorithms. It allows gradient-based optimizers to speed up along low curvature directions. Here, we apply the momentum with the one-point gradient estimation that provably achieves the same enhancement as in gradient algorithms. Hence the update rule follows as
\begin{align}
v^t=\beta v^{t-1}+\sigma\nabla \hat{u}^t(y^t), \label{eq:v} \\
y^{t+1} = P_{(1-\alpha)S}(y^t + \nu^t v^t),
\label{eq:y}
\end{align}
where $\alpha$ and $\beta$ are constant numbers. Besides, $P(\cdot)$ is the projection function $P_{S}(y) = \min_{x \in S} \norm{x-y}$. The reason to obtain $y^{t+1}$ in $(1-\alpha)S$ is to avoid the value of $z^t$ out of the defined action range \cite{flaxman2004online}. \textbf{Algorithm~\ref{alg:bgam}} summarizes the proposed BGAM decision-making strategy.
\begin{algorithm}[!htp]
\caption{Bandit Gradient Ascent with Momentum \\ for fog node $k \in \mathcal{K}$ and task $m \in \mathcal{M}$ }
\label{alg:bgam}
\begin{algorithmic}[0]
\STATE \textbf{Parameters}: $\xi=0.5$
\STATE \textbf{Initialization}: $y^{1} = 0, v^{1} = 0$.
\end{algorithmic}
\begin{algorithmic}[1]
\FOR{$t =1,2,3, \dots, T$}
\STATE Select unit vector $c_t$ uniformly at random;
\STATE Calculate $z^{t}$ using (\ref{eq:z^t});
\STATE Take action $x_{k,m}^t$ using (\ref{eq:action});
\STATE Estimate the gradient using (\ref{eq:egrad});
\STATE Update $v^{t}$ using (\ref{eq:v});
\STATE Obtain $y^{t+1}$ using (\ref{eq:y});
\ENDFOR
\end{algorithmic}
\end{algorithm}

\subsubsection{Complexity}
BGAM algorithm only stores the updated parameters. As such, the space requirement is $\mathcal{O}(1)$. The total runtime at $T$ is $\mathcal{O}(T)$. Besides, in the distributed structure, the communication overhead is not costly with bandit feedback, each fog node only shares its preferred task proportion information and receives the feedback information until the convergence of equilibrium. Thus the communication overhead is at most $\mathcal{O}(KT)$.
\subsubsection{Regret Analysis}
In the following, we establish a regret-bound for the BGAM algorithm. Before proceeding to the theory, it is essential to mention that introducing a momentum in BGAM does not influence its convergence. Rather, it accelerates the convergence rate in practical implementation. 
\begin{proposition}
\label{pr:RegretBGAM}
Let $u^{1},u^{2},\cdots,u^{T}: \mathcal{S} \rightarrow \mathbb{R}$ be a sequence of concave and differentiable functions. Besides, $g^{1},g^{2},\cdots,g^{T}$ are the single-point estimation of gradient with $g^{t} = \nabla \hat{u}^{t}(y^{t})$ and $\norm{g^{t}} \leq G$. Let $L$ be the Lipschitz constant of utility function. Assume that each $y^{t}$ generated by Algorithm~\ref{alg:bgam} satisfies $\norm{y^{i}-y^{j}}_{2} \leq D$, for all $i,j \in \{1,2,\cdots,T\}$ and $D$ being a positive constant. Select $\nu^{t}=\frac{\nu}{\sqrt{t}}$, $\sigma = T^{-0.25}\sqrt{\frac{RUr}{3(Lr+U)}}$, and $\alpha = \frac{\sigma}{r}$, where $U$ is the maximum value of utility function. Also, $r$ and $R$ satisfy the condition that shifted strategy space $S$ contains the ball of radius $r$ centered at the origin and is contained in the ball of radius $R$ \cite{flaxman2004online}. Here, we select $r=R=\xi$. Then the decision-making policy BGAM guarantees the following regret bound:
\begin{align}
\mathbb{E}[R(T)] \leq  \tilde{\mathcal{O}}(T^{\frac{3}{4}})
\end{align}
\end{proposition}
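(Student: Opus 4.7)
The plan is to follow the Flaxman--Kalai--McMahan analysis of bandit gradient descent, adapted to the ascent/maximization setting and then extended to handle the momentum term. The key starting point is Lemma~\ref{lem:apxg}: the one-point estimate $\sigma^{-1} u^t(z^t) c^t$ is an unbiased estimator of $\nabla \hat u^t(y^t)$, so the iterate $y^t$ produced by Algorithm~\ref{alg:bgam} can be viewed as online gradient ascent applied to the smoothed utilities $\hat u^t$ on the shrunken domain $(1-\alpha)S$, driven by noisy but unbiased gradient queries.

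The first step is to decompose the expected regret against the best shifted action $z^{*}$ as
\begin{align*}
\mathbb{E}[R(T)] ={} & \sum_{t=1}^T \mathbb{E}\bigl[u^t(z^{*}) - \hat u^t(z^{*})\bigr] \\
& + \sum_{t=1}^T \mathbb{E}\bigl[\hat u^t(z^{*}) - \hat u^t(z^{*}_\alpha)\bigr] \\
& + \sum_{t=1}^T \mathbb{E}\bigl[\hat u^t(z^{*}_\alpha) - \hat u^t(y^t)\bigr],
\end{align*}
where $z^{*}_\alpha = (1-\alpha)z^{*} \in (1-\alpha)S$ and the identity $\mathbb{E}[u^t(z^t)\mid y^t] = \hat u^t(y^t)$ converts the played utility into the smoothed value at the running parameter. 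By Proposition~\ref{prop:lips}, $|u^t - \hat u^t| \leq L\sigma$, so the first sum is at most $L\sigma T$; Lipschitz continuity (inherited by $\hat u^t$) also gives the second sum a bound of $L\alpha R T$ since $\|z^{*}-z^{*}_\alpha\| \leq \alpha R$. With the prescribed $\alpha = \sigma/r$ and $r = R = \xi$, both terms collapse to $O(L\sigma T)$.

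The third sum is the online gradient ascent regret on $\hat u^t$ restricted to $(1-\alpha)S$, with true gradients replaced by the bandit estimates. For the vanilla update $y^{t+1} = P_{(1-\alpha)S}(y^t + \nu^t g^t)$ with $\nu^t = \nu/\sqrt t$, Zinkevich's analysis gives a bound of order $D^2\sqrt T/\nu + \nu G^2 \sqrt T$, where the effective second-moment bound on the one-point estimator is $G^2 \leq U^2/\sigma^2$ (using $|u^t|\leq U$ and $\|c^t\|=1$). This yields an $O(RU\sqrt T / \sigma)$ contribution. Handling the momentum recursion $v^t = \beta v^{t-1} + \sigma \nabla \hat u^t(y^t)$ is the main technical obstacle: because $v^t$ is a geometrically weighted sum of past unbiased estimates, the proof has to either unroll the recursion and verify that $\mathbb{E}[v^t]$ remains a valid ascent direction while $\|v^t\|\leq U/(1-\beta)$ controls the quadratic terms in the standard descent-lemma telescoping, or invoke the remark preceding the proposition that momentum with $\beta<1$ only alters hidden constants without changing the asymptotic rate. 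Either route recovers the $O(RU\sqrt T/\sigma)$ term up to a factor of $1/(1-\beta)$.

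Finally, substituting $\sigma = T^{-1/4}\sqrt{RUr/(3(Lr+U))}$ balances the $L\sigma T$ term against $RU\sqrt T/\sigma$ at order $T^{3/4}$, yielding $\mathbb{E}[R(T)] = O(T^{3/4})$. The zero-mean Gaussian noise $N_0$ on the observed rewards does not spoil the unbiasedness of the gradient estimate; its sub-Gaussian tails contribute only logarithmic factors via an Azuma--Hoeffding bound on the martingale $\sum_t N_0 c^t$, which is precisely what the $\tilde{\mathcal{O}}$ notation in the statement absorbs.
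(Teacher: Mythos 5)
Your proposal follows essentially the same route as the paper's proof: the Flaxman--Kalai--McMahan decomposition into a smoothing error of order $L\sigma T$, a domain-shrinkage error of order $\alpha T$, and the online-gradient-ascent regret on the smoothed utilities with the one-point estimator bounded by $U/\sigma$, the momentum absorbed via $\|v^t\|\le U/(1-\beta)$ into a $1/(1-\beta)$ factor, and the same choice of $\sigma$ balancing $L\sigma T$ against $RU\sqrt{T}/\sigma$ at $T^{3/4}$. The only cosmetic differences are that the paper bounds the shrinkage term by $2\alpha U T$ using concavity through the origin rather than Lipschitzness, and its logarithmic factor arises from the $\sqrt{1+\log T}$ term in the bound on $\sum_{t}\nu^{t}(v^{t})^{2}$ (Lemma~\ref{lem:vt-b}) rather than from the observation noise.
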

\begin{proof}
See Appendix \ref{Sec:Proposition5}.
\end{proof}
\subsection{Lipschitz Bandit with Initialization} 
\label{sec:lb}
In the seminal MAB problem, the set of arms is finite. In many applications, that setting only serves as an imprecise model of the situation, which potentially forces the learner to pull only suboptimal arms, thereby causing a linear regret \cite{trovo2016budgeted}. In contrast, the Continuum-Armed Bandit (CAB) defines the set of arms over a continuous space. Such a setting has attracted intensive attention in the past few years due to its ability to model general situations. In particular, it fits the problems where the expected reward is a Lipschitz function of the arm, known as \textit{Lipschitz Bandits}.
\begin{proposition}
\label{pro:Lip}
For fog node $k$, the task allocation game can be modelled as a Lipschitz multi-armed bandit problem.
\end{proposition}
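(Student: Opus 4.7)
The plan is to verify the three defining ingredients of a Lipschitz multi-armed bandit problem and to derive each from results already at hand. As noted at the start of Section~\ref{sec:nrs}, the multi-task utility (\ref{eq:util}) is additive across tasks and each task's contribution depends only on the actions in that task, so fog node $k$'s problem decouples into $M$ independent single-task sub-problems. I would therefore fix a task $m$ and work with the scalar arm $x_{k,m} \in [0,1]$ for the remainder of the argument.

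Next I would identify the three ingredients in turn. First, the arm space is the compact continuous interval $[0,1]$, a metric space under the Euclidean distance; this is precisely what distinguishes a Lipschitz-bandit model from the classical finite-arm setting. Second, the observation model is bandit feedback: as stated in the paragraph preceding (\ref{eq:Reg}), after committing to $\boldsymbol{x}_k^t$ the fog node observes only the noisy realised utility $u_k^t(\boldsymbol{x}_k^t,\boldsymbol{X}_{-k}^t)+N_0$, and in particular the task-$m$ projection of that feedback is the only information about $x_{k,m}$ that is revealed. Third, the expected per-task reward is Lipschitz in the arm: Proposition~\ref{prop:lips} already establishes that, whenever $\sum_{i \in \mathcal{K}} x_{i,m}>0$, the utility function is $L$-Lipschitz on $[0,1]^M$, and the single-coordinate marginal therefore inherits the same Lipschitz constant $L$ uniformly in the opponents' profile $\boldsymbol{X}_{-k}$. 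Taking expectation over the independent zero-mean noise $N_0$ and over any randomisation of the other fog nodes' actions preserves the $L$-Lipschitz property via the triangle inequality. These three properties together match the standard definition of a Lipschitz multi-armed bandit as used in~\cite{bubeck2011lipschitz}, which is exactly what the proposition asserts.

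The main obstacle I expect is the degenerate case $\sum_{i \in \mathcal{K}} x_{i,m}^t = 0$, which is excluded from Proposition~\ref{prop:lips}'s hypothesis because of the zero denominator in (\ref{eq:akm}). I would handle it with the barrier-to-entry remedy noted in the footnote after (\ref{eq:akm}), namely replacing the denominator by $\varepsilon + \sum_{i \in \mathcal{K}} x_{i,m}^t$ for some small $\varepsilon>0$, or equivalently by restricting the arm set to $[\delta,1]$ for an arbitrarily small $\delta>0$. Either remedy keeps the marginal Lipschitz constant finite (it may degrade as $\varepsilon$ or $\delta$ tends to zero, but remains bounded for any fixed positive choice), so the resulting problem is a well-defined Lipschitz multi-armed bandit, and the LBWI algorithm developed in the sequel of Section~\ref{sec:lb} is then directly applicable.
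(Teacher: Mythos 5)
Your proof is correct and follows essentially the same route as the paper's, which simply invokes Proposition~\ref{prop:lips} to conclude that the continuous-armed, bandit-feedback problem with a Lipschitz mean reward is a Lipschitz bandit. Your additional care about the per-task decomposition, the preservation of the Lipschitz constant under expectation over the noise and the opponents' randomisation, and the degenerate case $\sum_{i}x_{i,m}=0$ (handled via the barrier-to-entry footnote) is welcome extra rigor that the paper leaves implicit.
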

\begin{proof}
See Appendix \ref{Sec:Proposition6}.
\end{proof}
A simple approach to solving the MAB problems with continuous arm space is discretizing the action space; nevertheless, determining the optimal quantization intervals is challenging and has a remarkable impact on the regret bound. Reference \cite{bubeck2011lipschitz} proposes a decision-making policy based on discretization, which consists of two phases: In the first phase, the method explores uniformly to find a crude estimate of the Lipschitz constant and determines the optimal number of intervals for discretization. In the second phase, it finds the best one using a standard exploration-exploitation strategy.

The BGAM algorithm proposed in Section~\ref{sec:bgam} only requires the bandit feedback; nevertheless, the information about the Lipschitz constant is necessary to optimize the hyperparameter according to \textbf{Proposition~\ref{pr:RegretBGAM}}. However, the Lipschitz bandit algorithm proposed in \cite{bubeck2011lipschitz} does not impose such a limitation, as the available approximation of the Lipschitz constant suffices for optimization. Besides, it guarantees a sublinear regret growth. 
\subsubsection{Lipschitz Bandit with Initialization (LBWI)} 
Based on the properties of the task allocation game, here we adopt  \textit{Exponential-weight algorithm for Exploration and Exploitation} (EXP3), whose core idea is to assign some selection probability to each discretized action at every trial, which is proportional to the exponentially weighted accumulated reward of that action \cite{auer2002nonstochastic}.

Reference \cite{bubeck2011lipschitz} shows that the optimal number of discretization intervals depends directly on the number of rounds T, i.e., a longer game requires more discretization intervals. However, using too many expands the exploration rate dramatically, thus reducing the efficiency and wasting computational resources. Therefore, we modify the classical Lipschitz bandit strategy by adding several steps between the two phases. More precisely, we implement an initialization process for the second phase with the information collected during the first phase. The initialization helps the algorithm effectively use the prior information about the action selection collected at the pure exploration phase. Therefore, it accelerates convergence despite increasing the memory cost.

In the first phase of the LBWI strategy, the agent quantizes the action space into $N$ intervals (arms), where $N$ is a random positive integer. Afterward, it performs pure exploration by pulling each arm  $\mathcal{A}$ times. We use $\lambda^t$ to denote the selected arm at time $t$, where $\lambda^t \in\{0,1,\ldots,N-1\}$ is an integer to indicate the selected discretized interval. To model a general situation with continuous action space, the real action $x_{k,m}^t$ is sampled from the corresponding interval as
\begin{gather}
x_{k,m}^t = \text{uniform\_sample}(\frac{\lambda^t}{N},\frac{\lambda^t+1}{N}). 
\label{eq:action-lbwi}
\end{gather}
Let $\hat{\mu}^t \in \mathbb{R}^N$ be a vector that collects the average utility for each specific arm. Besides, $\mathcal{A}^t \in \mathbb{R}^N$ stores the number of times each specific arm has been selected,
\begin{align}
\hat{\mu}^t[n] &= \begin{cases} 
\text{$\frac{\hat{\mu}^{t-1}[n]\mathcal{A}^{t-1}[n]+u_{k,m}^t}{\mathcal{A}^{t-1}[n]+1}$,}&\text{if $\lambda^t = n$,} \\
\text{$\hat{\mu}^{t-1}[n]$,}&\text{otherwise,}   
\end{cases} \qquad \\
%
\mathcal{A}^t[n] &= \begin{cases} 
\text{$1+\mathcal{A}^{t-1}[n]$,}&\quad\text{if $\lambda^t = n$,}\\
\text{$\mathcal{A}^{t-1}[n]$,}&\quad\text{otherwise.}
\end{cases} 
\label{eq:avg_num}
\end{align}
At the same time, in the first phase, a weighted vector $\Omega \in \mathbb{R}^N$ and a probability vector $p \in \mathbb{R}^N$ related to the utility are recorded. The update rule of the weighted matrix here is the same as that in EXP3 strategy. The first phase provides an approximation of the Lipschitz constant as
\begin{align}
\hat{L} &= N \max_{n \in \{0,1,\cdots, N-1\}}  \max_{ i\in \{-1,1\}} |\hat{\mu}^{T_1}[n] - \hat{\mu}^{T_1}[n+i]|, \label{eq:hatL} \\
\tilde{L} &= \hat{L} + N\sqrt{\frac{2}{\mathcal{A}}\ln(2NT)} 
\label{eq:tildeL}.
\end{align}
Then the EXP3 strategy in Phase II is performed based on the estimated Lipschitz constant obtained from Phase I. The number of discretized action intervals in Phase II is
\begin{gather}
\tilde{N} = N \left \lceil{\frac{\tilde{L}^{\frac{2}{3}}T^{\frac{1}{3}}}{N}}\right \rceil.
\end{gather}
As described before, besides estimating the Lipschitz constant and the optimal number of discretization intervals, we use the weight matrix $\Omega^{T_1}$ to feed some prior information into Phase II; Nonetheless, as the number of intervals changes in Phase II, it is essential to redistribute the weights $\Omega^{T_1}$ in Phase I to initialize the weights $\omega^{T_1+1}$ in Phase II. With the approximated Lipschitz constant and initialized weight matrix, the EXP3 strategy finds the optimal discretized action interval for solving the task allocation problem.

\textbf{Algorithm~\ref{alg:lbwi}} summarizes the proposed LBWI decision-making strategy. 
\begin{algorithm}[!htp]
\caption{Lipschitz Bandit with Initialization Strategy-\\
Pure Exploration Phase}
\label{alg:lbwi}
\begin{algorithmic}[0]
\STATE \textbf{Parameters}: \\
$N$: number of intervals for discretization in Phase I; \\
$\gamma$: Real constant number $\gamma \in (0,1]$;
\STATE \textbf{Initialization}: \\
$\Omega$: Weight matrix with $\Omega^1[n] = 1, n = 0,\ldots, N-1$;
\STATE \textbf{Pure Exploration Phase (Phase I)}
\end{algorithmic}
\begin{algorithmic}[1]
\FOR{$t = 1,2,3, \cdots, T_1$}
\STATE Select the arm $\lambda^t$ randomly;
\STATE Sample action $x_{k,m}^t$ randomly according to (\ref{eq:action-lbwi}) and get the utility $u_{k,m}^t$;
\STATE Update the average utility vector $\hat{\mu}^t$ and action selection matrix $\mathcal{A}^t$ according to (\ref{eq:avg_num});
\STATE Set $p^t[n] = (1-\gamma)\frac{\Omega^t[n]}{\sum_{i=1}^N\Omega^t[i]} + \frac{\gamma}{N}, n = 0,1,\cdots,N-1$;
\STATE Set 
\begin{gather}
\Omega^{t+1}[n] = \begin{cases} 
\text{$\Omega^t[n]\exp{\frac{\gamma u_{k,m}^t}{Np^t[n]}}$,}&\text{if $\lambda^t = n$}\\
\text{$\Omega^t[n]$,}&\text{otherwise}
\end{cases} \notag
\end{gather}
\ENDFOR
\STATE Obtain $\hat{L}$ using (\ref{eq:hatL}) and $\tilde{L}$ using (\ref{eq:tildeL}).
\end{algorithmic}
\begin{algorithmic}[0]
\STATE \textbf{EXP3-Phase (Phase II)}
\end{algorithmic}
\begin{algorithmic}[1]
\FOR{$n = 0, 1,\cdots,\tilde{N}-1$}     
\STATE $i = \lceil \frac{n}{N} \rceil$, $\omega^{T_1+1}[n] = \frac{N}{\tilde{N}}\Omega^{T_1}[i]$; \COMMENT{Initialization}
\ENDFOR
\FOR{$t = T_1+1, \cdots, T$}
\STATE  Set $p^t[n] = (1-\gamma)\frac{\omega^t[n]}{\sum_{i=1}^{\tilde{N}}\omega^t[i]} + \frac{\gamma}{\tilde{N}}, n =0,1,\ldots,\tilde{N}-1$;
\STATE Select arm $\lambda^t$ according to probability matrix $\boldsymbol{p}^t$;
\STATE Sample the corresponding action $\boldsymbol{x}_{k,m}^t$ according to $\lambda^t$
\begin{gather}
x_{k,m}^t = \text{uniform\_sample}(\frac{\lambda^t}{\tilde{N}},\frac{\lambda^t+1}{\tilde{N}});
\end{gather}
\STATE Receive the reward/utility $u_{k,m}^t$ and update the weight matrix as
\begin{gather}
\omega^{t+1}[n] = \begin{cases} 
\text{$\omega^t[n]\exp{\frac{\gamma u_{k,m}^t}{\tilde{N}p^t[n]}}$,}&\text{if $\lambda^t = n$}\\
\text{$\omega^t[n]$,}&\text{otherwise}
\end{cases} \notag
\end{gather}
\ENDFOR
\end{algorithmic}
\end{algorithm}
\subsubsection{Complexity}
The complexity of the two-phase Lipschitz bandit algorithm is much more than that of BGAM. In the exploration phase, the runtime is $\mathcal{O}(T_1), T_1 << T$, whereas the runtime of the EXP3 phase is $\mathcal{O}(T^{\frac{4}{3}})$ because of the round-related discretization number $\tilde{N}$. Therefore, its total runtime yields $\mathcal{O}(T^{\frac{4}{3}})$. The total space requirement is $\mathcal{O}(T^{\frac{1}{3}})$. The communication overhead is the same as BGAM, thus it is upper bounded by $\mathcal{O}(KT)$.
\subsubsection{Regret Analysis}
In the following, we establish a regret-bound for the LBWI algorithm. 
\begin{proposition}
\label{pr:RegretLBWI}
Let $L$ be the Lipschitz constant of utility function and let $H$ be the uniform bound of utility function's Hessians. Select $N$ for division in phase I satisfying $N \geq \frac{8H}{L}$. With probability at lease $1-\frac{1}{T}$, the decision-making policy LBWI achieves the expected regret bound: 
\begin{gather}
\mathbb{E}[R(T)] \leq \tilde{\mathcal{O}}(T^{\frac{5}{6}})
\end{gather}
\end{proposition}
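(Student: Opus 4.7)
The plan is to decompose the total expected regret according to the two phases of \textbf{Algorithm~\ref{alg:lbwi}}, bound each separately, and then balance the hyperparameters. I would write
\[
\mathbb{E}[R(T)] = \mathbb{E}[R_{\mathrm{I}}] + \mathbb{E}[R_{\mathrm{II}}],
\]
where $R_{\mathrm{I}}$ and $R_{\mathrm{II}}$ collect the instantaneous regrets over rounds $1,\ldots,T_1$ and $T_1+1,\ldots,T$ respectively. For Phase~I the argument is essentially trivial: by \textbf{Proposition~\ref{prop:lips}} the utility is Lipschitz on the compact action space $[0,1]$ and therefore uniformly bounded by a constant $C$; since Phase~I is pure (uniform) exploration over $N$ intervals with $\mathcal{A}$ pulls each, the per-round regret is at most $C$, giving $\mathbb{E}[R_{\mathrm{I}}] \le C\, T_1 = C\mathcal{A} N$.

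Next I would establish that $\tilde L$ defined in~(\ref{eq:tildeL}) upper-bounds the true Lipschitz constant $L$ with probability at least $1-1/T$. The argument applies Hoeffding's inequality to each empirical mean $\hat{\mu}^{T_1}[n]$, which is an average of $\mathcal{A}$ i.i.d.\ observations, followed by a union bound over the $N$ intervals; the confidence correction $N\sqrt{(2/\mathcal{A})\ln(2NT)}$ precisely compensates for the sampling fluctuations so that, on the good event, $\tilde L \ge L$ and simultaneously $\tilde L = O(L + N/\sqrt{\mathcal{A}})$. The lower bound $N \ge 8H/L$ enters here to ensure that the second-order Taylor remainder is dominated by the first-order Lipschitz term across adjacent cells, so that the discrete differences in~(\ref{eq:hatL}) actually reflect the Lipschitz constant.

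For Phase~II I would condition on the good event above and invoke the standard EXP3 regret bound, which gives $\tilde{\mathcal{O}}(\sqrt{\tilde N\, T\, \ln \tilde N})$ against the best discretized arm on $\tilde N$ cells. To this I add the discretization bias, which, by the $L$-Lipschitz property and the fact that each cell has width $1/\tilde N$, is at most $L T/\tilde N \le \tilde L\, T/\tilde N$. Plugging in $\tilde N = N\lceil \tilde L^{2/3} T^{1/3}/N\rceil$ from the algorithm balances these two contributions and yields $\mathbb{E}[R_{\mathrm{II}}] = \tilde{\mathcal{O}}(\tilde L^{1/3} T^{2/3})$ on the good event; the complementary event contributes at most $O(1)$ to the expectation since it has probability $1/T$ and the per-round regret is bounded. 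The re-initialization $\omega^{T_1+1}[n] = (N/\tilde N)\, \Omega^{T_1}[\lceil n/N\rceil]$ preserves the simplex structure required by EXP3 and only shifts the potential-function analysis by an additive $O(\ln \tilde N)$ term, which is absorbed into the $\tilde{\mathcal{O}}$ notation. Substituting $\tilde L = O(L + N/\sqrt{\mathcal{A}})$ and choosing $N$ together with $\mathcal{A}$ so that $\mathcal{A} N$ balances the Phase-II contribution produces the announced $\tilde{\mathcal{O}}(T^{5/6})$ rate.

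The main obstacle I expect is managing the two-sided dependence between the phases: both the Phase-II horizon weights and the discretization $\tilde N$ are functions of the random estimate $\tilde L$ produced in Phase~I, and the initialization of $\omega$ couples the probability vectors across phases. Carefully conditioning on the Hoeffding good event, handling the bad event separately via the crude bound that per-round regret is $O(1)$, and verifying that the weight-transfer step does not inflate the standard EXP3 potential argument are the most delicate technical points.
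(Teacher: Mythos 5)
Your proof is correct and follows essentially the same route as the paper's: the Phase-I/Phase-II decomposition with $\mathbb{E}[R_{\mathrm{I}}]\le C\mathcal{A}N$, the Hoeffding-plus-union-bound concentration of $\tilde{L}$ (which the paper imports wholesale as Lemma~\ref{lem:apxl}), and the Phase-II bound as EXP3 regret on $\tilde{N}$ arms plus the discretization bias $LT/\tilde{N}$ with $\tilde{N}\approx\tilde{L}^{2/3}T^{1/3}$. One small correction to your closing remark: the exponent $5/6$ does not come from balancing $\mathcal{A}N$ against the Phase-II term (with constant $N,\mathcal{A}$ your own intermediate bounds give the stronger $\tilde{\mathcal{O}}(T^{2/3})$, which a fortiori implies the stated claim); in the paper it arises from the deliberately loose step $\ln(e\tilde{L}^{2/3}T^{1/3})\le \tilde{L}^{2/3}T^{1/3}$ inside the EXP3 square root, which turns $\sqrt{\tilde{N}T\ln\tilde{N}}$ into $r'\tilde{L}^{2/3}T^{5/6}$.
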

\begin{proof}
See Appendix \ref{Sec:Proposition7}.
\end{proof}
\begin{proposition}
\label{pr:Convergence}
Consider the task allocation game formulated in \textbf{Section \ref{sec:tag}}. If all of the fog nodes implement BGAM or LBWI decision-making strategies, then the game converges to the unique Nash equilibrium. 
\end{proposition}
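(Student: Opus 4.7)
The plan is to assemble three ingredients already established in the paper and then invoke a classical convergence result for socially concave games. First, by \textbf{Proposition~\ref{prop:ta-sc}}, the task allocation game $\Gamma$ is socially concave; second, by \textbf{Proposition~\ref{prop:ta-ue}}, whenever $\Gamma$ converges to a Nash equilibrium, that equilibrium is unique; third, by \textbf{Proposition~\ref{pr:RegretBGAM}} and \textbf{Proposition~\ref{pr:RegretLBWI}}, the strategies BGAM and LBWI guarantee expected cumulative regret of order $\tilde{\mathcal{O}}(T^{3/4})$ and $\tilde{\mathcal{O}}(T^{5/6})$ respectively, both sublinear in $T$. Consequently, the time-averaged regret $R_k^{T}/T$ vanishes for every fog node $k$, so both algorithms are no-regret in the sense required by the socially-concave-game machinery.

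Next, I would invoke the convergence theorem of Even-Dar, Mansour, and Nadav (\cite{even2009convergence}), which states that in any socially concave game, if every player runs a no-regret strategy, the time-averaged joint action profile $\bar{\boldsymbol{X}}^{T} = \frac{1}{T}\sum_{t=1}^{T}\boldsymbol{X}^{t}$ converges to the set of Nash equilibria of the game, and moreover each player's time-averaged utility converges to the utility evaluated at a Nash equilibrium. Applying this theorem to $\Gamma$ with the sublinear regret rates established above yields convergence of $\bar{\boldsymbol{X}}^{T}$ to the equilibrium set of $\Gamma$.

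Finally, I would combine this set-valued convergence with \textbf{Proposition~\ref{prop:ta-ue}}. Because the equilibrium set of $\Gamma$ is the singleton $\{\boldsymbol{X}^{*}\}$, the limit set of $\bar{\boldsymbol{X}}^{T}$ degenerates to this single point, giving $\bar{\boldsymbol{X}}^{T} \to \boldsymbol{X}^{*}$ and hence convergence of the dynamics to the unique Nash equilibrium. The argument is agnostic to which of the two algorithms is used, since only the no-regret property and the structure of $\Gamma$ enter into it.

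The main obstacle I anticipate is the gap between full-information and bandit feedback in applying the Even-Dar et al. theorem: the original statement is phrased in terms of deterministic external regret, whereas our guarantees hold only in expectation over the random gradient estimator of BGAM and the random arm draws of LBWI. The remedy is either to keep the whole argument in expectation (replacing the realized joint profile by its conditional expectation given the history, which is well defined because each $u_k$ is Lipschitz by \textbf{Proposition~\ref{prop:lips}}), or to upgrade the regret bounds to high-probability form via standard concentration of the one-point gradient estimator and of the EXP3 importance-weighted estimates, so that the empirical regret tracks its expectation with overwhelming probability and the deterministic convergence theorem applies almost surely.
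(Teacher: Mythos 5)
Your argument follows essentially the same route as the paper's own proof: it assembles \textbf{Proposition~\ref{prop:ta-sc}} (social concavity), \textbf{Proposition~\ref{prop:ta-ue}} (uniqueness), the Even-Dar et al.\ convergence result (\textbf{Lemma~\ref{Lm:SCConvergence}}), and the sublinear regret bounds of \textbf{Propositions~\ref{pr:RegretBGAM}} and~\ref{pr:RegretLBWI}, which is precisely the chain the paper uses. Your closing remark about the gap between expected regret under bandit feedback and the deterministic no-regret hypothesis of the convergence lemma identifies a subtlety the paper's proof silently passes over, but it is a refinement of the same argument rather than a different one.
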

\begin{proof}
See Appendix \ref{Sec:Proposition8}.
\end{proof}
\begin{remark}
The proposed strategies, i.e., BGAM and LBWI, are generic in the sense that they are implementable not only for the utility function in \eqref{eq:util} but also for any other utility function that satisfies the Lipschitz- and the concave-convexity conditions given by \textbf{Proposition}~\ref{prop:lips} and \ref{prop:conc-conv}. In particular, for BGAM, the Lipschitz condition is not a necessity.
\end{remark}
\section{Learning with Linear Rewards Strategy} 
\label{sec:llr}
Learning with linear rewards (LLR) strategy \cite{gai2012combinatorial} is a decision-making policy for the stochastic combinatorial multi-armed bandit problem with linear rewards, i.e., when the total reward is a linearly-weighted combination of the selected random variables. The policy combines the upper confidence bound (UCB) and combinatorial optimization strategies. We briefly review this strategy and use it later as a benchmark for numerical performance evaluation.

Different with proposed strategies, LLR strategy has more limitations: every node must select one task at each round, meaning $x_{k,m} \in \{0,1\}$ and $\sum_{m \in \mathcal{M}} x_{k,m} = 1$. Therefore, if $M > K$, some tasks remain unassigned. To store the information after taking an action at a time slot, the policy uses two metrics, namely sample mean matrix $\hat{\theta}_{K \times M}$ and observed times matrix $\mathcal{C}$, defined as
\begin{align}
\hat{\theta}_{k,m}^t &=  \begin{cases} 
\text{$\frac{\hat{\theta}_{k,m}^{t-1}\cdot\mathcal{C}_{k,m}^{t-1}+u_{k,m}^t}{\mathcal{C}_{k,m}^{t-1}+1}$,}&\text{if $x_{k,m}^t = 1$,} \\
\text{$\hat{\theta}_{k,m}^{t-1}$,}&\text{otherwise,}   
\end{cases}  \qquad \\ 
\mathcal{C}_{k,m}^t &= \begin{cases} 
\text{$1+\mathcal{C}_{k,m}^{t-1}$,}&\quad\text{if $x_{k,m}^t = 1$,}\\
\text{$\mathcal{C}_{k,m}^{t-1}$,}&\quad\text{otherwise.}
\end{cases} \label{eq:avg_n}
\end{align}
For initialization, the LLR policy runs a random play to ensure that every action is played at least once ($\mathcal{C}_{k,m} > 0$). Afterward, unlike distributed strategies, it uses the maximum weighted matching to allocate the tasks in a centralized way. More precisely, we consider the bipartite graph $\mathcal{G}=(V,E)$, where $V = \{\mathcal{K},\mathcal{M}\}$ is the vertex set that includes the fog nodes and the tasks. The LLR strategy assigns one task to each agent via the classical bipartite matching algorithm, such as the Hungarian algorithm. The edge weight $(k,m)$ represents the upper confidence bound of the utility of the fog node (agent) $k$ in performing task $m$, which is updated as follows
\begin{gather}
\mathcal{W}_{k,m}^t = \hat{\theta}_{k,m}^t + \sqrt{\frac{(\Xi+1)\ln t}{\mathcal{C}_{k,m}^t}} \label{eq:weights},
\end{gather}
where $\Xi = \text{min}\{K,M\}$. \textbf{Algorithm~\ref{alg:llr}} summarizes the LLR strategy.
\begin{algorithm}[!htp]
\caption{Learning with Linear Rewards}
\label{alg:llr}
\begin{algorithmic}[1]
\FOR{$t = 1,2,3, \cdots, T_1$}
\STATE Select the arm $x_{k,m}^t$ randomly;
\STATE Update the average utility vector $\hat{\theta}_{k,m}$ and action selection matrix $\mathcal{C}_{k,m}$ according to (\ref{eq:avg_n});
\ENDFOR
\FOR{$t = T_1+1,\ldots, T$}
\STATE Play an action which solves the maximum weight matching problem with weights matrix $W$ according to (\ref{eq:weights})
\STATE Update the average utility vector $\hat{\theta}_{k,m}$ and action selection matrix $\mathcal{C}_{k,m}$ according to (\ref{eq:avg_n}).
\ENDFOR
\end{algorithmic}
\end{algorithm}
\subsubsection{Analysis} 
Unlike BGAM and LBWI, LLR is a centralized policy. It requires all fog nodes to share their information with a central coordinator, thus increasing the overhead and computational cost. Besides, it allocates each agent only one task, which is a less flexible allocation mechanism ignoring the different capabilities of fog nodes. That aspect, despite reducing the computational cost slightly, the remaining unprocessed tasks will lead to a linear increase in regret over time. Concerning the privacy aspects and the selfishness of the fog nodes, BGAM and LBWI are more efficient than LLR. We establish this claim also through numerical analysis in the next section.
\section{Numerical Analysis} 
\label{sec:num-ana}
We divide the numerical analysis into two parts. In \textbf{Section~\ref{sec:na-g1}}, we consider a toy scenario with two fog nodes and two tasks. The goal is to clarify the workflow of the developed task allocation schemes. Then, in \textbf{Section~\ref{sec:na-g2}}, we expand the scenario significantly by increasing the number of tasks and fog nodes (FNs). The goal is to analyze the performance of the proposed strategies compared with the state-of-art solutions.
\subsection{Game I}
\label{sec:na-g1}
\subsubsection{Model Parameter}
In the two-server two-tasks allocation game, we show the two fog nodes' action profiles as $\begin{bmatrix}
[x_{11},x_{12}] \\
[x_{21},x_{22}] \\
\end{bmatrix}$.  The efficiency index $\boldsymbol{\rho} = \begin{bmatrix}
0.9 & 0.5 \\
0.6 & 0.85 \\
\end{bmatrix}$, power consumption index $\boldsymbol{\epsilon} = \begin{bmatrix}
0.1 & 0.03 \\
0.05 & 0.2 \\
\end{bmatrix}$ and the resource consumption index $\boldsymbol{\kappa}= \begin{bmatrix}
0.1 & 0.8 \\
0.75 & 0.05 \\
\end{bmatrix}$, where $\rho_{k,m}, \kappa_{k,m}, k,m \in\{1,2\}$ respectively correspond to the performance index and basic energy consumption of node $k$ in task $m$. 

\textbf{Fig.~\ref{fig:up22}} shows the 3D plots of each node's utility function. It indicates that the utility functions are Lipschitz continuous. Intuitively, if FN2 refuses to participate in Task 1, the utility of both fog nodes becomes maximum. Similarly, if FN1 does not contribute to Task 2, both nodes' utility is maximum, which is consistent with the mathematical proofs as well as the intuition of decision-making: When the power efficiency index is relative small, if the cost consumption index is higher than the performance index for some specific task, then the fog server prefers to avoid performing that task. For example, for $\rho_{1,2} \leq \kappa_{1,2}$, then FN1 hesitates to contribute to Task 2.
\begin{figure}
  \vspace{-15pt}
  \centering
  \includegraphics[width=.8\linewidth]{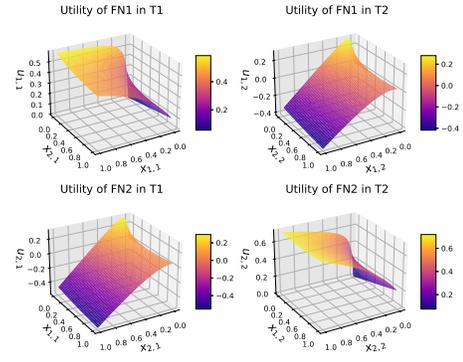}
  \vspace{-15pt}
  \captionof{figure}{3D plots of the utility of two FNs.}
  \label{fig:up22}
  \vspace{-10pt}
\end{figure}%
\begin{figure*}
\centering
\begin{minipage}{.35\textwidth}
  \centering
  \hspace{-15pt}
  \includegraphics[width=.95\linewidth]{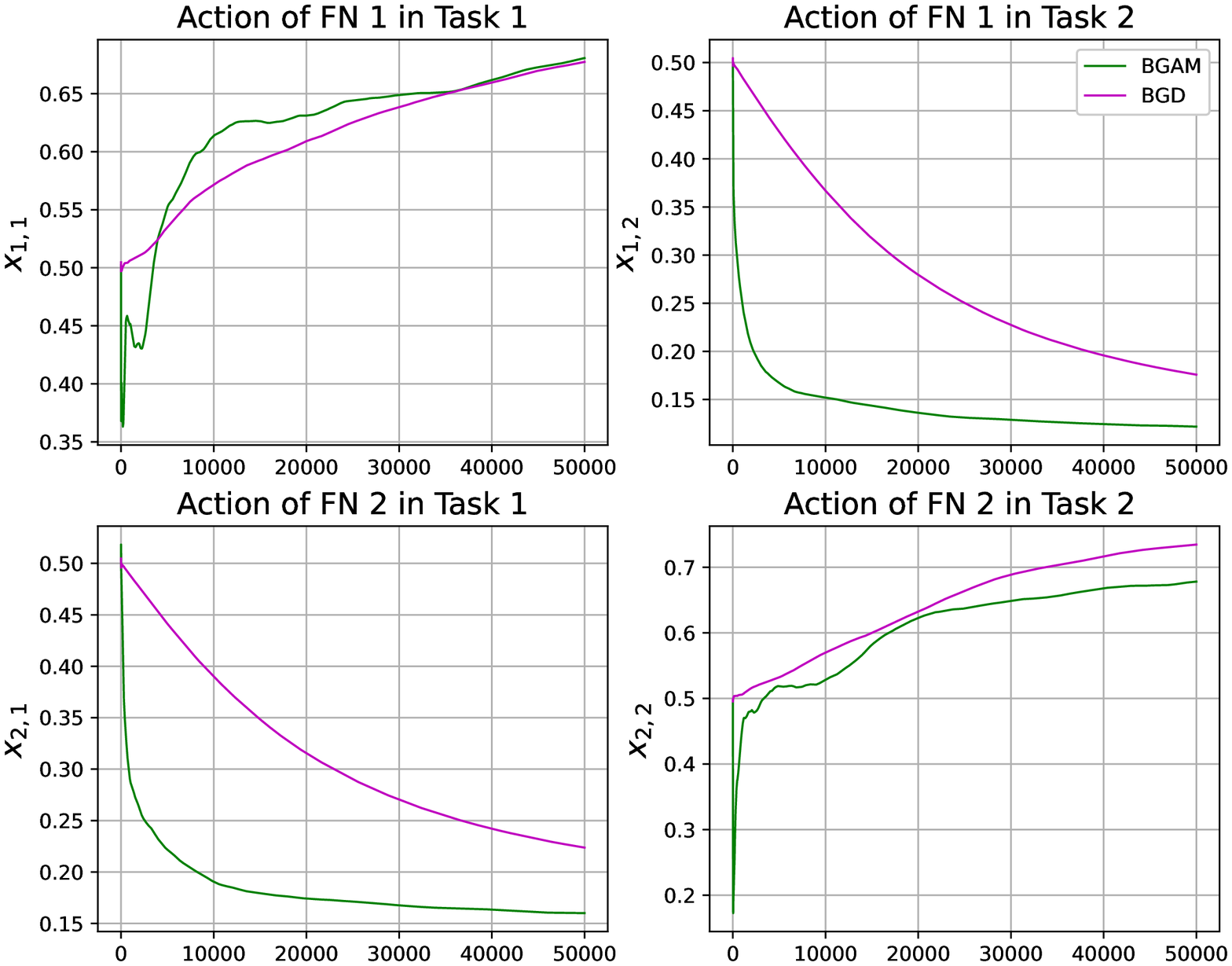}
  \vspace{-5pt}
  \captionof{figure}{Actions of two FNs that use the bandit gradient algorithms.}
  \label{fig:fnact1}
\end{minipage}
\begin{minipage}{.35\textwidth}
\begin{subfigure}{.98\textwidth}
  \centering
  \includegraphics[width=.90\linewidth]{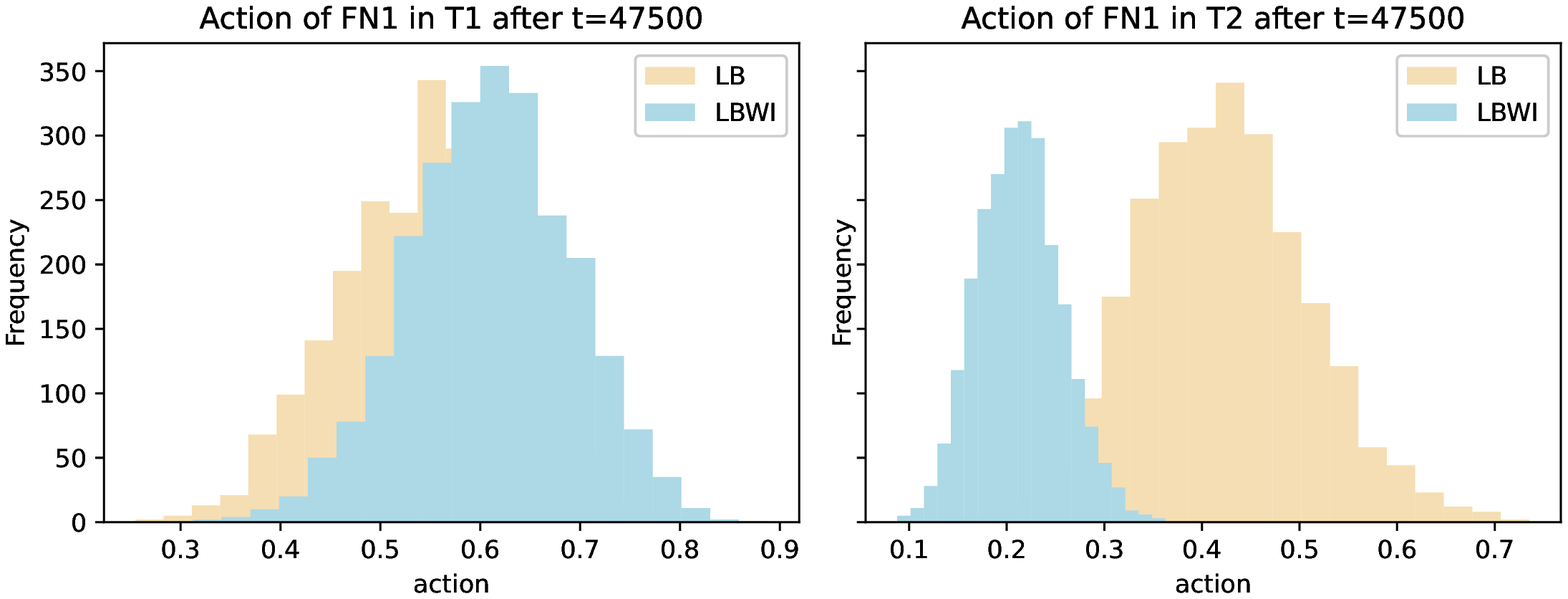}  
\end{subfigure} \newline
\vspace{-15pt}
\begin{subfigure}{.98\textwidth}
  \centering
  \hspace{-15pt}
  \includegraphics[width=.90\linewidth]{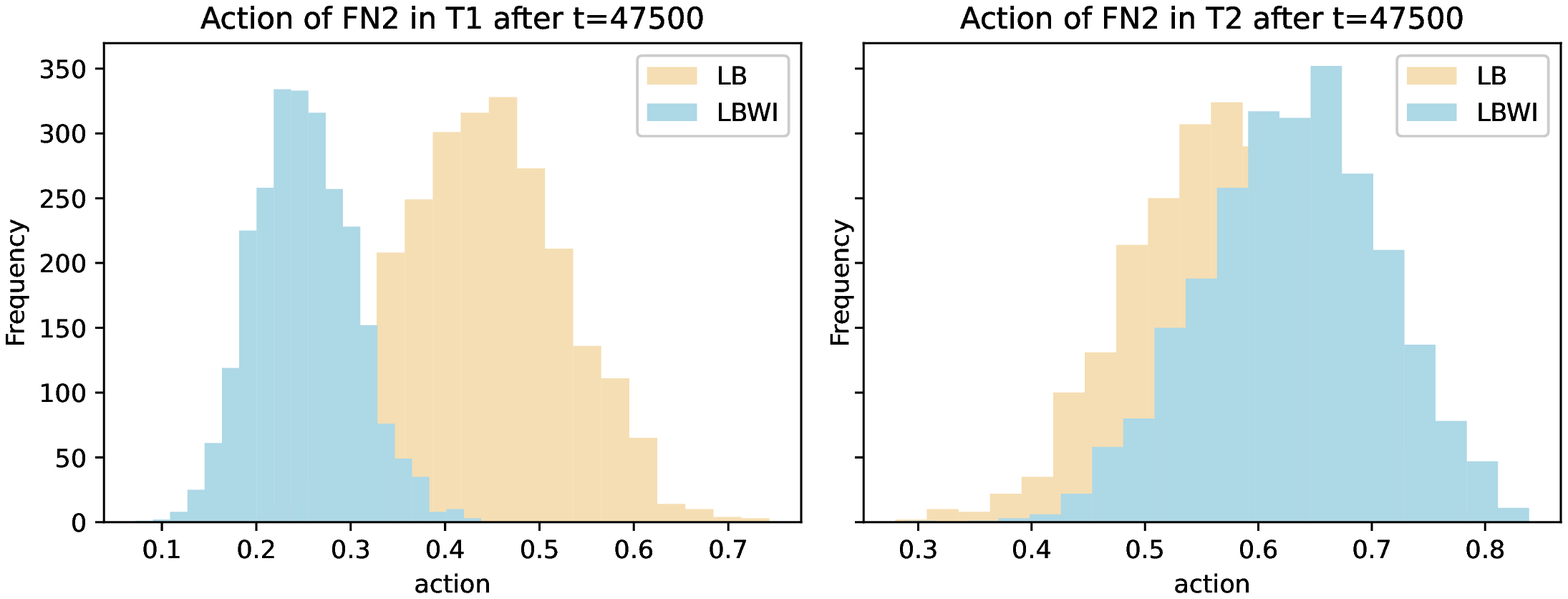}  
\end{subfigure}
\vspace{5pt}
\caption{Actions' frequency of two FNs \\ after $t = 45000$.}
\label{fig:e1lip}
\end{minipage}%
\begin{minipage}{.29\textwidth}
  \centering
  \includegraphics[width=0.9\linewidth]{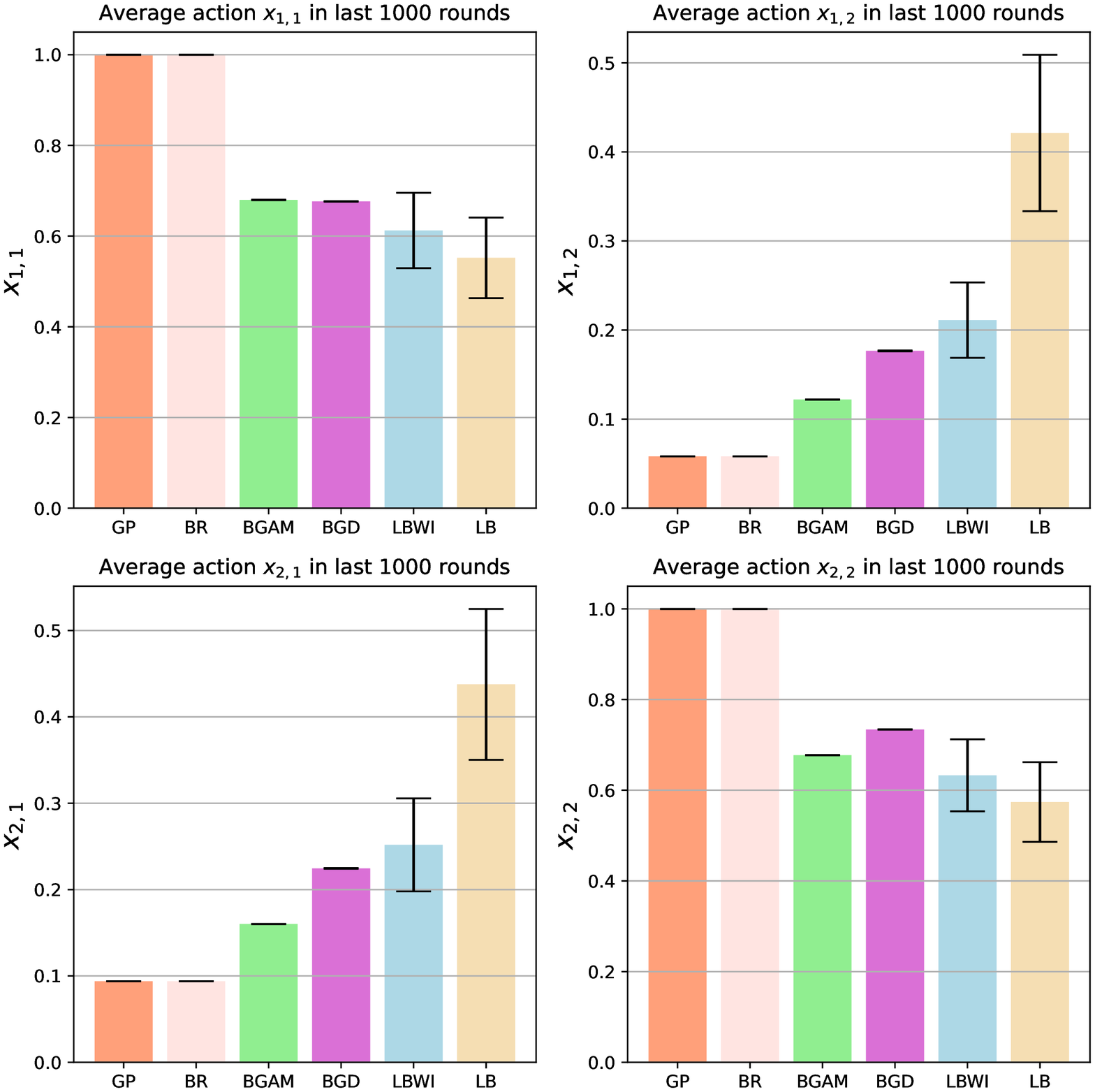}
  \caption{Average action in the final 1000 rounds.}
  \label{fig:necv}
\end{minipage}
\end{figure*}
\subsubsection{Results and Discussions} 
In this section, we compare the performance of BGAM and LBWI policies with their original versions bandit gradient descent (BGD) and Lipschitz bandit (LB), respectively. Note that BGD and the proposed BGAM algorithm belong to the class of bandit gradient optimization algorithms, while the Lipschitz EXP3 bandit with initialization, is a type of Lipschitz bandit strategy.

\textbf{Fig.~\ref{fig:fnact1}} shows the evolution of the average actions of each fog node in bandit gradient optimization algorithms. The results demonstrate that the average actions converge. Note that $x_{k,m}$, $k \in \mathcal{K}, m \in \mathcal{M}$ is the proportion that fog node (FN) $k$ prefers to contribute to task $m$, the real allocated amount is based on \eqref{eq:akm}. Besides, the convergence point is consistent with the conclusion that we drew from the utility plots. Both $x_{1,2}$ (fraction of Task 2 for Node 1) and $x_{2,1}$ (fraction of Task 1 for Node 2) converge towards some small value (almost zero) while the action of FN1 in Task 1 and FN2 in Task 2 keep increasing, indicating to have larger task share. 

Besides, we compare the evolution of actions of LBWI and LB. As described before, the first phase of the Lipschitz bandit strategies involves only uniform random selection; as such, we omit the selection frequency histograms for the initial phase. \textbf{Fig.~\ref{fig:e1lip}} demonstrates the selection frequency of different action fractions between $[0,1]$ by the LBWI and LB strategies after $90\%$ of rounds. We observe the same convergence tendency as \textbf{Fig.~\ref{fig:fnact1}}. After $90\%$ of rounds, the average for actions with LBWI strategy is $\bar{x}_{1,1}\approx 0.6$, $\bar{x}_{1,2}\approx 0.2$, $\bar{x}_{2,1}\approx 0.2$, and $\bar{x}_{2,2} \approx 0.6$. These results approve that LBWI is an effective strategy in the task allocation game. Compared to LB, LBWI performs better in identifying the suitable action fraction range, which means that initialization after the first phase enhances the performance of the Lipschitz bandit strategy. Nevertheless, compared to BGAM, the average action is not ideal, and the reason is that the Lipschitz bandit strategies sample the action fraction from the interval; thus, it is hard to achieve the optimal solution. Besides, in a lengthy game, the number of intervals is very large; that increases the computational cost and expands the exploration period. Consequently, the performance of BGAM is superior to that of LBWI. 

To validate the uniqueness of Nash equilibrium and to compare the performance of the proposed strategies, \textbf{Fig.~\ref{fig:necv}} shows the average actions within a small number of rounds near the end. We do the performance evaluation in different contexts, namely, online learning and game theory:
\begin{itemize}
    \item Greedy Projection (GP) \cite{zinkevich2003online}: This strategy stems from online convex optimization. It selects an arbitrary initial value $x_{k,m}^1$ and a sequence of learning rate $\eta^1, \eta^2, \ldots \in \mathbb{R}^+$. At time step $t$, it chooses the next vector $x_{k,m}^{t+1}$ as $$x_{k,m}^{t+1} = P(x_{k,m}^t + \eta^t \nabla u_{k,m}^t(x_{k,m}^t)).$$
    \item Best Response (BR) \cite{shoham2008multiagent}: This strategy is popular in game theory. Given $\boldsymbol{X}_{-k}$ as the strategies of all players excluding player $k$, then the player $k$'s best response strategy is $$\boldsymbol{x}_k^* = \text{argmax}_{\boldsymbol{x}_k}u_k(\boldsymbol{x}_k, \boldsymbol{X}_{-k}).$$
\end{itemize}
Both GP and BR strategies require full feedback. Given that, they converge swiftly. Besides, their convergence points are almost the same, i.e., they converge to the same Nash equilibrium. Finally, among four bandit-feedback strategies, the average actions of BGAM have the minimum distance to the optimal actions at Nash equilibrium. The average action profile of LBWI is comparable to that of BGD despite more deviation. \textbf{Fig.~\ref{fig:e1util}} shows the average utility of each node generated by different strategies. This figure confirms the results that appear in \textbf{Fig.~\ref{fig:necv}}. BGAM offers the highest utilities compared to other strategies with bandit feedback. Besides, both BGAM and LBWI perform better than their former version, BGD, and LB, respectively. In other words, our proposed modifications are remarkably effective, especially to solve the task allocation problem. The utility of LBWI is slightly less than BGD, which is due to its deviation. The utility of both fog nodes increases over time; However, unlike bandit gradient strategies, the Lipschitz bandit strategies suffer a low, non-increasing utility at the beginning, which arises due to the random selection in the first phase, and vanishes later. Also, in the second phase, the Lipschitz bandit strategies execute exploration in a vast space $\mathcal{O}(T^{\frac{1}{3}})$, which escalates the computational cost.  Therefore, especially in a long-run intensive sequential experiment, Lipschitz bandit strategies spend plenty of time and computational resources in exploration, which degrades their performance compared to the bandit gradient algorithms.

\begin{figure*}
\centering
\begin{minipage}{.24\textwidth}
  \vspace{-5pt}
  \centering
  \includegraphics[width=0.97\linewidth]{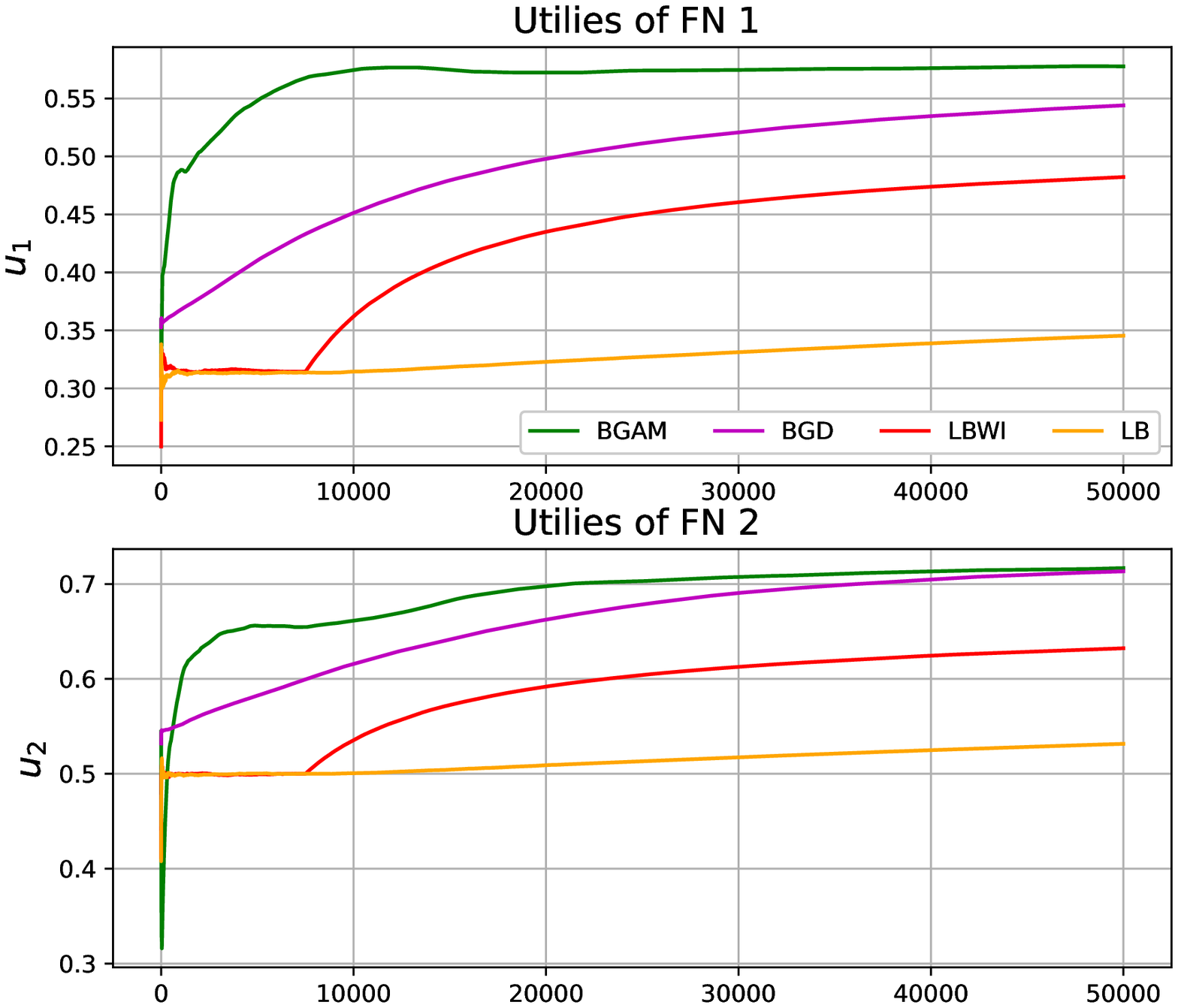}
  \caption{Utilities of two FNs.}
  \label{fig:e1util}
\end{minipage}
\begin{minipage}{.37\textwidth}
  \centering
  \includegraphics[width=0.95\linewidth]{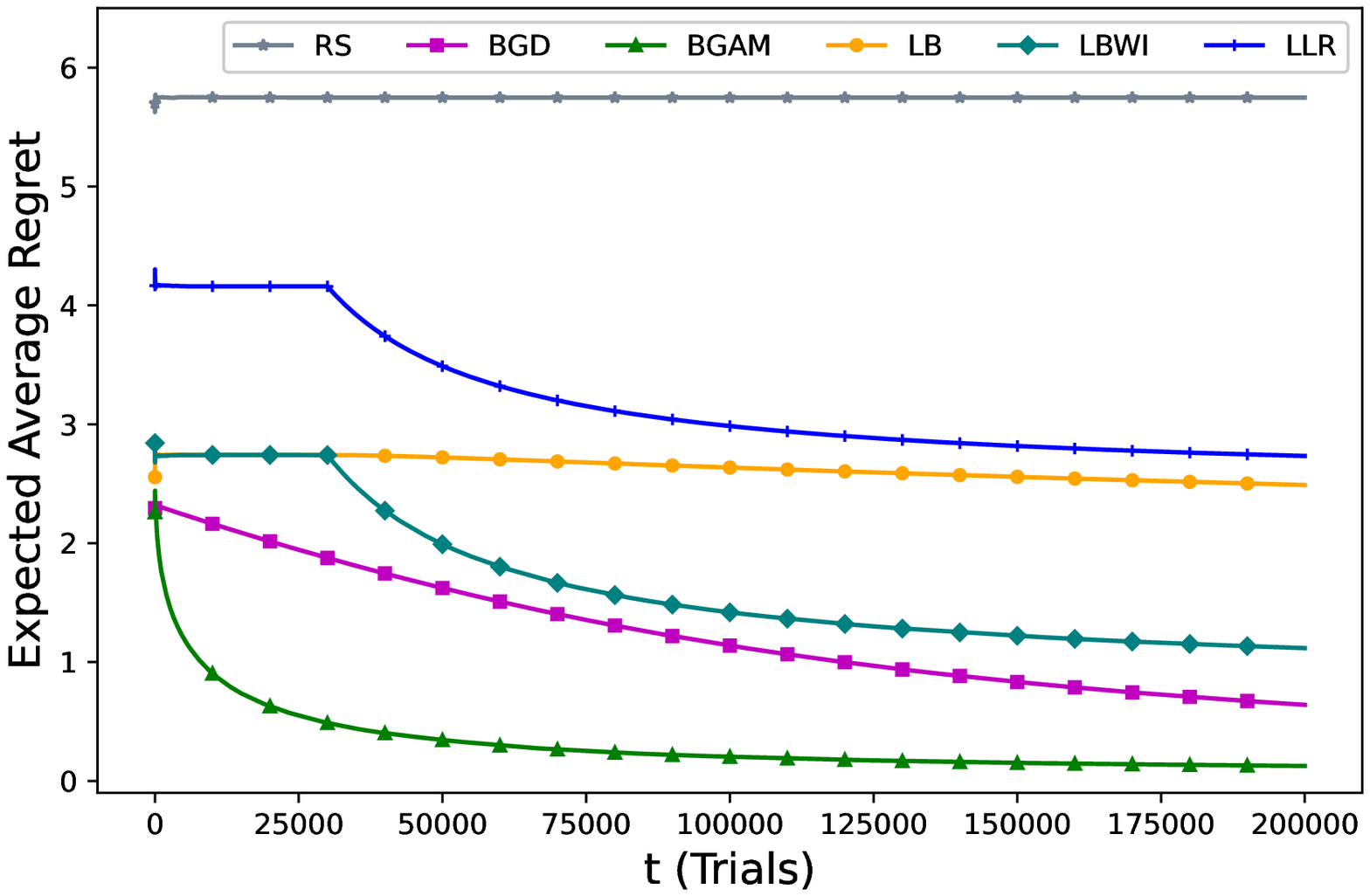}
  \caption{Expected average regret ($K=5,M=10$).}
  \label{fig:e2}
\end{minipage}%
\begin{minipage}{.37\textwidth}
  \centering
  \includegraphics[width=0.95\linewidth]{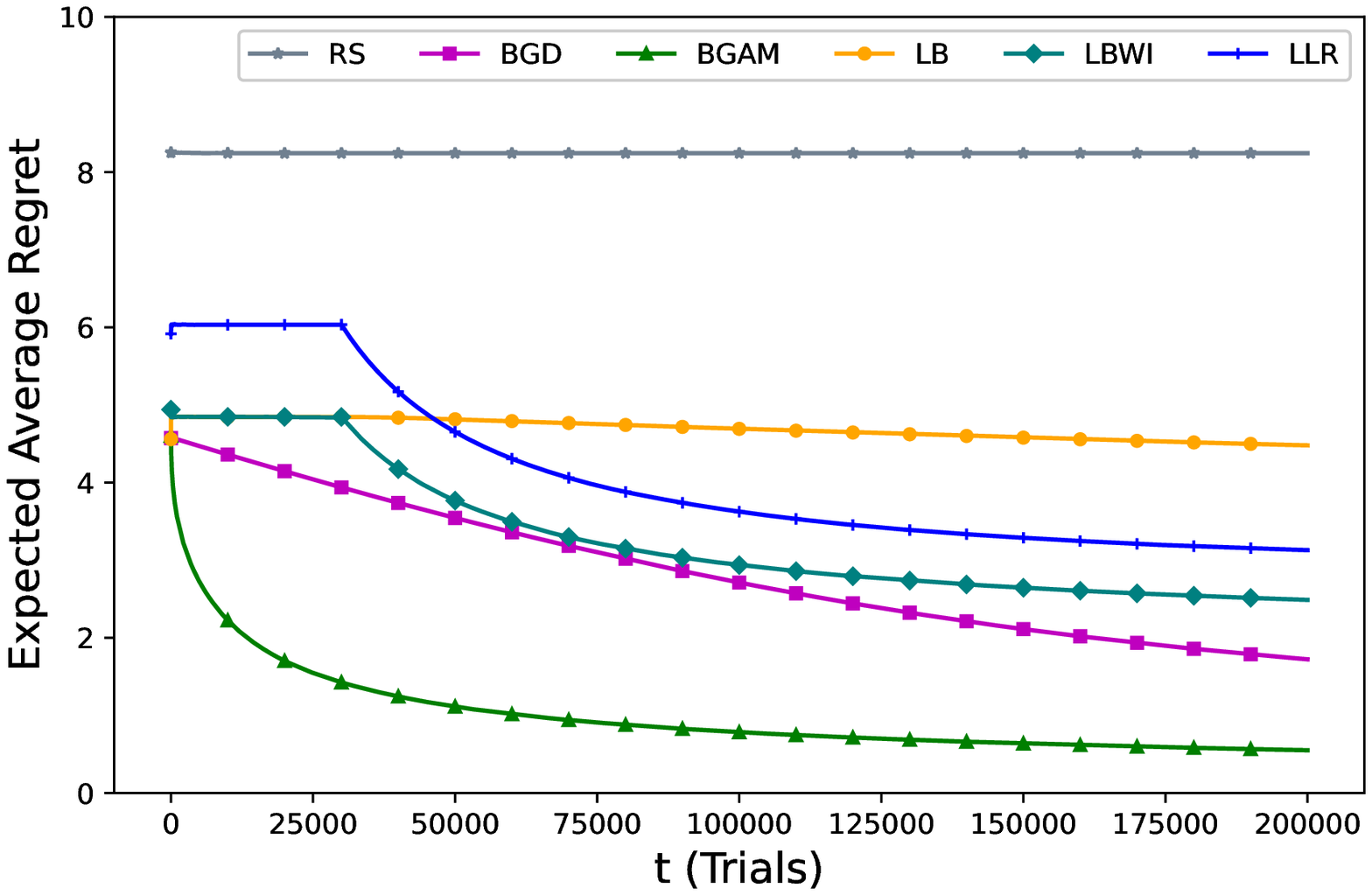}
  \caption{Expected average regret ($K=10,M=10$).}
  \label{fig:e3}
\end{minipage}
\end{figure*}
\subsection{Game II}
\label{sec:na-g2}
In this section, for a reliable performance evaluation, we perform some experiments with a larger network. Specifically, the network has five and ten fog nodes ($K = 5, 10$) that perform seven tasks ($M = 10$) cooperatively. The efficiency index and resource consumption index are obtained from a real-world dataset \textbf{Cloud-Fog Computing Dataset} \footnote{\url{https://www.kaggle.com/datasets/sachin26240/vehicularfogcomputing?resource=download}}. This dataset collects a detailed description of independent tasks and the information on 13 nodes (three cloud nodes and ten fog nodes). In this simulation, we only consider the performance in ten fog nodes and assume that the cloud nodes have unlimited computing ability. Details are available in Appendix~\ref{app:supl-e}.

As discussed in \textbf{Section~\ref{sec:na-g1}}, GP and BR dynamics reach the Nash equilibrium easily with the help of full-information feedback, which can be regarded as optimal strategies. Consequently, we use their obtained utility values at the convergence point as the maximum achievable utility to measure the regret of proposed policies. Besides the LLR strategy described in \textbf{Section~\ref{sec:llr}}, we use the uniformly random action selection (RS) as a benchmark. 

The results for $K = 5$, $M = 10$ is shown in \textbf{Fig.~\ref{fig:e2}} while \textbf{Fig.~\ref{fig:e3}} shows the expected average regret for $K = 10$, $M = 10$. All the experiments are based on ten-time average. Based on this figure, we conclude the followings:
\begin{itemize}
\item In both experiments, BGAM shows the best performance with the minimum regret (regret converges to zero) and the swiftest convergence. Compared with BGD, momentum plays a significant role in that achievement.
\item In the regret plot part of LBWI and LB are non-decreasing in the beginning. It is because of the pure random exploration in their first phase.  
\item The essential information required by the decision-making policies decreases in the following order: LLR, BGD(BGAM), LBWI, and LB. However, in \textbf{Fig.~\ref{fig:e2}} and \textbf{~\ref{fig:e3}}, LLR does not perform well, and the regret of LLR is even larger than LB in \textbf{Fig.~\ref{fig:e2}} and LBWI in \textbf{Fig.~\ref{fig:e3}}. The reason of poor LLR performance in the experiment with $K = 5$ fog nodes and $M = 10$ tasks is the following limitation: Every agent matches one and at most one task per round, and it is only implementable when $K \leq M$. Despite the centralized architecture, these limitations reduce the performance of the LLR and restrict its scope of application. 
\item Thanks to the momentum, BGAM converges faster than BGD and achieves much better performance than LLR even with less information. Besides, our proposed algorithms do not have the limitations of LLR, therefore, have more applications.
\item The regrets of LBWI and LB are higher than others due to the following reasons. Both methods have an initial phase of pure exploration in the action space, and for the LB, the number of intervals concluded from the first phase is always large ($>50$), which extends exploration. Because of our modifications, the initialization with information collected in the first phase is assistive in LBWI; hence, it incurs a lower regret than the LB strategy and achieves better performance than LLR even with less information.
\end{itemize}
\begin{remark}
According to \textbf{Fig.~\ref{fig:e2}} and \textbf{Fig.~\ref{fig:e3}}, only after $t=25000$ the average regret of BGAM decreases to less than a reasonable threshold ($<10\%$). That seems to be implausible in several applications; Nevertheless, as stated by \textbf{Theorem~\ref{the:eps-equ}}, one can suffice to a near-optimal solution, namely, $\epsilon$-Nash equilibrium, instead of achieving the unique Nash equilibrium. That shortens the convergence time significantly with negligible performance degradation. 
\end{remark}
\section{Conclusion} 
\label{sec:concl}
The task allocation problem in a fog computing infrastructure is studied in this paper. Every fog node has some preference over tasks, which is initially unknown. The self-interested fog nodes aim to perform the tasks in a distributed manner so that the outcome of consecutive interactions is an efficient equilibrium. We formulate the scenario as an online convex optimization problem.  We developed two decision-making policies for the formulated problem, namely BGAM and LBWI. Theoretically, we established that both methods guarantee a sublinear upper bound for the regret growth, and they converge to the unique Nash equilibrium. Numerical results showed that the proposed BGAM algorithm achieves a close-to-optimal utility performance superior to the existing algorithms. Future research directions include
studying the problem in even more realistic settings, for example, considering the constraints on fog nodes' task performance or distributed task assignments with delayed- or missed feedback in some rounds. In addition, it is also important to improve learning strategies to reduce the regret and complexity bound.
\section{Appendix}
\label{Sec:Appendix}
\subsection{Some Auxiliary Results Required for the Proofs}
\label{Sec:Auxiliary}
%
\begin{theorem} [\cite{fuente_2000,sundaram_1996}] 
\label{the:conv-conc}
Let $f: U \rightarrow \mathbb{R}$ be a twice continuously differentiable function defined on $U \subset \mathbb{R}^n$, i.e. $f\in C^2(U)$. Then 
\begin{itemize}
    \item[(i)] $f$ is convex iff the Hessian matrix $D^2f(x)$ is positive semidefinite for all $x\in U$, i.e.,
    \begin{gather*}
        \langle D^2f(x)h,h \rangle \geq 0, \ \text{for any} \ h \in \mathbb{R}^n.
    \end{gather*}
    If $\langle D^2f(x)h,h \rangle > 0, \ \text{for any} \ h \in \mathbb{R}^n \setminus \{0\}$, then $f$ is strictly convex.
    \item[(ii)] $f$ is concave iff the Hessian matrix $D^2f(x)$ is negative semidefinite for all $x\in U$, i.e.,
    \begin{gather*}
        \langle D^2f(x)h,h \rangle \leq 0, \ \text{for any} \ h \in \mathbb{R}^n.
    \end{gather*}
    If $\langle D^2f(x)h,h \rangle < 0, \ \text{for any} \ h \in \mathbb{R}^n \setminus \{0\}$, then $f$ is strictly concave.
\end{itemize}
\end{theorem}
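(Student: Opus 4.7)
The plan is to reduce the multivariate statement to the familiar one-dimensional second-derivative test for convexity along every line segment contained in $U$ (assuming $U$ is a convex domain, or restricting to any convex subset of it), and then to appeal to the standard characterization of convex $C^1$ functions by the gradient inequality. For any $x\in U$ and $h\in\mathbb{R}^n$, I would introduce the scalar auxiliary function $g(t)=f(x+th)$, defined on an open interval of $t$ values for which $x+th\in U$. The chain rule gives $g''(t)=\langle D^2 f(x+th)\,h, h\rangle$, and this identity is the bridge between the multivariate Hessian form and the univariate second derivative that the whole argument rides on.

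For part (i), the forward implication is direct: if $f$ is convex then $g$ is convex along every line, so $g''(0)\ge 0$, which is exactly $\langle D^2 f(x)h, h\rangle \ge 0$ for arbitrary $x,h$. For the converse, I would invoke Taylor's theorem with Lagrange remainder: for $x,y\in U$ with $[x,y]\subset U$ there exists $\xi\in[x,y]$ such that $f(y) = f(x) + \langle \nabla f(x), y-x\rangle + \tfrac{1}{2}\langle D^2 f(\xi)(y-x), y-x\rangle$. Non-negativity of the quadratic term yields the first-order gradient inequality $f(y)\ge f(x) + \langle \nabla f(x), y-x\rangle$, which is the well-known characterization of convexity for $C^1$ functions, completing the equivalence. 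The strict convexity statement follows by the same Taylor identity: positive definiteness of $D^2 f$ makes the remainder strictly positive whenever $y\ne x$, promoting the gradient inequality to its strict form, which is equivalent to strict convexity.

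Part (ii) follows by symmetry, applied to $-f$: concavity of $f$ is equivalent to convexity of $-f$, and $D^2(-f) = -D^2 f$ is positive semidefinite precisely when $D^2 f$ is negative semidefinite, so (i) transfers verbatim. The strict case is identical.

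The main subtlety I expect is not the calculus but the geometry of the domain: the definition of convexity demands that segments $[x,y]$ lie in $U$, so the argument implicitly relies on $U$ being convex (the standard convention in this literature, and the setting in which both cited references work). Once $U$ is convex, the proof is a routine combination of the chain rule, the one-dimensional second-derivative test, Taylor's theorem with remainder, and the first-order characterization of convexity, all of them textbook facts, which is consistent with this being stated as an auxiliary result quoted from \cite{fuente_2000,sundaram_1996} without a fresh derivation in the body of the paper.
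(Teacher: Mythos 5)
The paper does not prove this statement at all: it is quoted as a known auxiliary result from \cite{fuente_2000,sundaram_1996}, so there is no in-paper argument to compare yours against. Your proof is the standard textbook derivation and is correct as far as the theorem is used here --- restriction to lines via $g(t)=f(x+th)$ with $g''(t)=\langle D^2 f(x+th)h,h\rangle$, Taylor's theorem with Lagrange remainder for the converse, the first-order gradient characterization of convexity, and reduction of (ii) to (i) through $-f$ --- and you correctly flag the only real hypothesis gap, namely that $U$ must be convex for the segment-based definition to make sense (the convention in both cited references, and satisfied by the domains $[0,1]^M$ on which the paper applies the result).
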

\begin{lemma}[\cite{even2009convergence}]
\label{lem:cc}
Consider a socially concave game $\Gamma$. If for every player $k \in \mathcal{K}$ the utility function $u_{k}(\boldsymbol{x}_{k},\boldsymbol{X}_{-k})$ is twice differentiable in every $\boldsymbol{x}_{k} \in \mathcal{S}_{k}$, then $\Gamma$ is a concave game.
\end{lemma}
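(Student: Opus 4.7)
The plan is to prove the lemma by working at the level of Hessians, exploiting both defining conditions of a socially concave game simultaneously. By the auxiliary \textbf{Theorem~\ref{the:conv-conc}}, since every $u_k$ is $C^2$, showing that $\Gamma$ is a concave game (in the sense of \textbf{Definition~\ref{def:concv}}) reduces to proving that the block Hessian $\nabla^2_{\boldsymbol{x}_k} u_k(\boldsymbol{x}_k,\boldsymbol{X}_{-k})$ is negative semidefinite for every $k \in \mathcal{K}$ and every fixed $\boldsymbol{X}_{-k} \in \mathcal{S}_{-k}$.

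First, I would apply condition (i) of \textbf{Definition~\ref{def:sc-concv}}: there exist positive weights $\{\lambda_k\}$ summing to one such that $\sigma(\boldsymbol{X},\boldsymbol{\lambda}) = \sum_{j=1}^K \lambda_j u_j(\boldsymbol{x}_j,\boldsymbol{X}_{-j})$ is concave jointly in $\boldsymbol{X}$. Hence its full Hessian $\nabla^2_{\boldsymbol{X}}\sigma$ is negative semidefinite, and in particular every principal block on the diagonal is negative semidefinite. Picking the $k$-th diagonal block gives
\begin{gather*}
\sum_{j=1}^{K} \lambda_j \, \nabla^2_{\boldsymbol{x}_k} u_j \;\preceq\; 0.
\end{gather*}
Here the $j=k$ term is the own-Hessian $\lambda_k \nabla^2_{\boldsymbol{x}_k} u_k$, while for $j \neq k$ the variable $\boldsymbol{x}_k$ is a coordinate of $\boldsymbol{X}_{-j}$.

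Next I would invoke condition (ii): for each $j$, the map $\boldsymbol{X}_{-j} \mapsto u_j(\boldsymbol{x}_j,\boldsymbol{X}_{-j})$ is convex, so by \textbf{Theorem~\ref{the:conv-conc}} its Hessian with respect to $\boldsymbol{X}_{-j}$ is positive semidefinite. Restricting to the $\boldsymbol{x}_k$-coordinates (a principal submatrix) shows $\nabla^2_{\boldsymbol{x}_k} u_j \succeq 0$ for every $j \neq k$. Combining this with the block-diagonal inequality above, for any direction $h$,
\begin{gather*}
\lambda_k \, \langle \nabla^2_{\boldsymbol{x}_k} u_k \, h, h \rangle \;=\; \Big\langle \Big(\sum_{j=1}^{K} \lambda_j \nabla^2_{\boldsymbol{x}_k} u_j\Big) h, h\Big\rangle \;-\; \sum_{j \neq k} \lambda_j \langle \nabla^2_{\boldsymbol{x}_k} u_j \, h, h \rangle \;\leq\; 0,
\end{gather*}
because the first term is non-positive (NSD block) and each subtracted term is non-negative (PSD Hessian with positive weight). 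Since $\lambda_k > 0$, division yields $\nabla^2_{\boldsymbol{x}_k} u_k \preceq 0$, which by \textbf{Theorem~\ref{the:conv-conc}} means $u_k$ is concave in $\boldsymbol{x}_k$ for every fixed $\boldsymbol{X}_{-k}$, establishing the claim.

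The only subtlety, and what I would treat as the main obstacle, is interpreting condition (i) correctly: the statement "concave in every $\boldsymbol{x}_k$" must be read as joint concavity of $\sigma$ in $\boldsymbol{X}$ (otherwise the diagonal-block argument is unavailable). This is the standard reading in the socially concave games literature, and without it the lemma is false in general because coordinate-wise concavity alone does not control cross terms. Given that reading, the rest of the argument is the clean Hessian manipulation above.
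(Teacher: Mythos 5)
Your Hessian argument is correct, but note that the paper does not prove this lemma at all: it is imported verbatim as an auxiliary result from the cited reference, so there is no in-paper proof to compare against. Two remarks on your route. First, the standard proof (the one in the cited source) needs no second derivatives: for fixed $\boldsymbol{X}_{-k}$ one writes $\lambda_k u_k = \sigma - \sum_{j \neq k} \lambda_j u_j$, observes that $\sigma(\cdot,\boldsymbol{X}_{-k})$ is concave in $\boldsymbol{x}_k$ by condition (i) and that each $u_j$ with $j \neq k$ is convex in $\boldsymbol{x}_k$ by condition (ii) (a restriction of a convex function of $\boldsymbol{X}_{-j}$ to the affine slice where only the $\boldsymbol{x}_k$-coordinates vary), so $\lambda_k u_k$ is a positive combination of concave functions; your proof is exactly the second-order shadow of this decomposition, which is why the differentiability hypothesis is dispensable. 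Second, your closing "subtlety" is not actually needed for your own argument: the only object you use from condition (i) is the $k$-th diagonal block $\nabla^2_{\boldsymbol{x}_k}\sigma \preceq 0$, and that block being negative semidefinite is precisely what the weaker, blockwise reading "$\sigma$ is concave in every $\boldsymbol{x}_k$ for fixed $\boldsymbol{X}_{-k}$" already gives via \textbf{Theorem~\ref{the:conv-conc}} — you never touch the off-diagonal cross blocks, so the lemma (and your proof) survives under either interpretation of the definition. The one genuinely loose end is that bounding $\nabla^2_{\boldsymbol{x}_k} u_j$ for $j \neq k$ requires $u_j$ to be twice differentiable in $\boldsymbol{x}_k$, i.e., in a coordinate of $\boldsymbol{X}_{-j}$, which the hypothesis as literally stated does not grant; the decomposition argument sidesteps this entirely because restriction of a convex function to an affine subspace is convex without any smoothness.
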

\begin{definition} [Diagonally Strictly Concave \cite{rosen1965existence}]
\label{def:dsc}
Define a weighted sum of utility functions $\sigma(\boldsymbol{X},\boldsymbol{r})= \sum_{k=1}^{K} r_{k} u_{k}(\boldsymbol{X})$, where $\boldsymbol{r} = \{r_{1}, \dots, r_{K}\}, r_{k} \geq 0, \ \forall k$. The pseudogradient of $\sigma(\boldsymbol{X},\boldsymbol{r})$ for any nonnegative $r$ is defined as
\begin{gather}
\nabla \sigma(\boldsymbol{X},\boldsymbol{r}) =  \left[ r_{1} \nabla_{1} u_{1}(\boldsymbol{X}), \cdots, r_{K}\nabla_{K} u_{K}(\boldsymbol{X}) \right]^T, 
\label{eq:pg}
\end{gather}
where $\nabla_{k}$ is the gradient with respect to $\boldsymbol{x}_{k}$. The function $\sigma(\boldsymbol{X},\boldsymbol{r})$ is called diagonally strictly concave (DSC) for a given non-negative $\boldsymbol{r}$ if for every distinct pair $\boldsymbol{X}^{0},\boldsymbol{X}^{1} \in \mathcal{S}$, we have
\begin{gather}
(\boldsymbol{X}^{1}-\boldsymbol{X}^{0})^{T}(\nabla \sigma(\boldsymbol{X}^1,\boldsymbol{r})-\nabla \sigma(\boldsymbol{X}^0,\boldsymbol{r})) < 0. 
\label{eq:dsc}
\end{gather}
\end{definition}
\begin{theorem}[\cite{rosen1965existence}]
\label{the:uniq-ne}
Consider a game with orthogonal constraint set. The function $\sigma(\boldsymbol{X},\boldsymbol{r})$ is diagonally strictly concave with positive definite $\boldsymbol{r}$. If a Nash equilibrium exists, it is unique.
\end{theorem}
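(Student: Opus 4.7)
The plan is to argue by contradiction: suppose two distinct Nash equilibria $\boldsymbol{X}^0$ and $\boldsymbol{X}^1$ coexist, and then derive a violation of the DSC inequality \eqref{eq:dsc}. First I would invoke the defining property of a Nash equilibrium: at $\boldsymbol{X}^*$, each $\boldsymbol{x}_k^*$ maximizes the concave function $u_k(\,\cdot\,,\boldsymbol{X}_{-k}^*)$ over player $k$'s feasible set. Because the joint constraint set is \emph{orthogonal} (a Cartesian product across players), unilaterally replacing $\boldsymbol{x}_k^*$ by any other feasible $\boldsymbol{x}_k$ is admissible, so the first-order necessary condition for a maximum reads
\begin{equation*}
(\boldsymbol{x}_k-\boldsymbol{x}_k^*)^{T}\nabla_k u_k(\boldsymbol{X}^*)\ \leq\ 0\qquad \text{for every feasible }\boldsymbol{x}_k.
\end{equation*}

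Next I would instantiate this variational inequality twice: once at $\boldsymbol{X}^0$ with the test point $\boldsymbol{x}_k = \boldsymbol{x}_k^1$ (feasible by orthogonality), and once at $\boldsymbol{X}^1$ with $\boldsymbol{x}_k = \boldsymbol{x}_k^0$. Multiplying each inequality by the scalar $r_k>0$ (which is guaranteed by the positive-definiteness of $\boldsymbol{r}$) and summing over $k$ turns the per-player conditions into two statements about the pseudogradient defined in \eqref{eq:pg}:
\begin{gather*}
(\boldsymbol{X}^1-\boldsymbol{X}^0)^{T}\,\nabla\sigma(\boldsymbol{X}^0,\boldsymbol{r})\ \leq\ 0,\\
(\boldsymbol{X}^0-\boldsymbol{X}^1)^{T}\,\nabla\sigma(\boldsymbol{X}^1,\boldsymbol{r})\ \leq\ 0.
\end{gather*}
Adding these two inequalities and regrouping yields
\begin{equation*}
(\boldsymbol{X}^1-\boldsymbol{X}^0)^{T}\bigl[\nabla\sigma(\boldsymbol{X}^1,\boldsymbol{r})-\nabla\sigma(\boldsymbol{X}^0,\boldsymbol{r})\bigr]\ \geq\ 0,
\end{equation*}
which directly contradicts the DSC condition \eqref{eq:dsc} since $\boldsymbol{X}^0\neq \boldsymbol{X}^1$. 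The contradiction forces uniqueness.

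The step I expect to require the most care is the very first one, namely the legitimate invocation of the per-player variational inequality with $\boldsymbol{x}_k^{1-j}$ as the test point. This hinges entirely on the orthogonality of the constraint set: without it, the profile $(\boldsymbol{x}_k^1,\boldsymbol{X}_{-k}^0)$ might not be feasible for player $k$ at $\boldsymbol{X}^0$, and the argument collapses. A secondary subtlety is the interpretation of ``positive definite $\boldsymbol{r}$'', which I would read as $r_k>0$ for all $k$ (equivalently $\mathrm{diag}(\boldsymbol{r})\succ 0$); this is exactly what is needed to multiply the player-wise first-order conditions by $r_k$ without reversing the inequality and to make the assembled expression coincide with the pseudogradient in \eqref{eq:pg}. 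Once those two points are clean, the remaining manipulations are pure linear-algebraic bookkeeping against \eqref{eq:dsc}.
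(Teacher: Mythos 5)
Your proof is correct. Note that the paper itself gives no proof of this statement --- it is imported verbatim from Rosen (1965) as an auxiliary result --- so the only meaningful comparison is with the original source, and your argument is essentially Rosen's: two putative equilibria each satisfy the per-player variational inequality $(\boldsymbol{x}_k-\boldsymbol{x}_k^*)^{T}\nabla_k u_k(\boldsymbol{X}^*)\leq 0$ (which is where the orthogonality of the constraint set is used, exactly as you flag), and weighting by $r_k>0$, summing, and adding the two resulting pseudogradient inequalities contradicts \eqref{eq:dsc}. The only difference from Rosen's general treatment is that he handles coupled constraints via Kuhn--Tucker multipliers, whereas your variational-inequality formulation is the clean specialization to the product constraint set assumed in the statement; nothing is lost for the purposes of this paper, where the strategy sets are indeed a Cartesian product of the $[0,1]^M$ cubes.
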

\begin{theorem}[\cite{even2009convergence}] 
\label{the:eps-equ}
Consider a socially concave game $\Gamma$ with $K$ players. If every player $k$ plays according to a procedure with external regret bound $\mathcal{R}_k(t)$, then at time t, the followings hold:
\begin{itemize}
\item[(i)] The average strategy vector $\hat{\boldsymbol{X}}^t$ is an $\epsilon$- Nash equilibrium, where $\epsilon^t = \frac{1}{\lambda_{min}}\sum_{i\in \mathcal{K}}\frac{\lambda_i\mathcal{R}_i}{t}$ and $\lambda_{min} = \min_{i \in \mathcal{K}}\lambda_i$. 
\item[(ii)] The average utility of each player $k$ is close to her utility at $\hat{\boldsymbol{X}}^t$, the average vector of strategies. Formally,
\begin{gather*}
\abs{\hat{u}_k^t - u_k (\hat{\boldsymbol{X}}^t) } \leq \frac{1}{\lambda_k}\sum_{i\in \mathcal{K}}\frac{\lambda_i\mathcal{R}_i}{t}.
\end{gather*}
\end{itemize}
\end{theorem}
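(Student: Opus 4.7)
The plan is to derive both claims from two structural properties of the socially concave game: convexity of each $u_k$ in the opponents' profile $\boldsymbol{X}_{-k}$, and joint concavity of the weighted social welfare $\sigma(\boldsymbol{X},\boldsymbol{\lambda}) = \sum_{k} \lambda_k u_k(\boldsymbol{X})$. Throughout, let $\hat{\boldsymbol{X}}^t = \frac{1}{t}\sum_{s=1}^t \boldsymbol{X}^s$ be the averaged profile and $\hat{u}_k^t = \frac{1}{t}\sum_{s=1}^t u_k(\boldsymbol{X}^s)$ the averaged utility. The argument splits into three short steps.

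First I would establish a one-sided utility-to-empirical bound. For any fixed $\boldsymbol{z} \in \mathcal{S}_k$, Jensen's inequality applied to the convex slice $\boldsymbol{X}_{-k} \mapsto u_k(\boldsymbol{z},\boldsymbol{X}_{-k})$ yields
\begin{equation*}
u_k(\boldsymbol{z},\hat{\boldsymbol{X}}_{-k}^t) \leq \frac{1}{t}\sum_{s=1}^t u_k(\boldsymbol{z},\boldsymbol{X}_{-k}^s).
\end{equation*}
Because $\boldsymbol{z}$ is a time-invariant comparator, the external-regret hypothesis bounds the right-hand side by $\hat{u}_k^t + \mathcal{R}_k(t)/t$. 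Specialising to $\boldsymbol{z}=\hat{\boldsymbol{x}}_k^t$ gives the key inequality $u_k(\hat{\boldsymbol{X}}^t) - \hat{u}_k^t \leq \mathcal{R}_k(t)/t$ for every $k$.

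Next I would exploit the joint concavity of $\sigma(\cdot,\boldsymbol{\lambda})$: Jensen's inequality in the joint argument gives $\sigma(\hat{\boldsymbol{X}}^t,\boldsymbol{\lambda}) \geq \frac{1}{t}\sum_{s=1}^t \sigma(\boldsymbol{X}^s,\boldsymbol{\lambda})$, i.e.,
\begin{equation*}
\sum_{k} \lambda_k\bigl(u_k(\hat{\boldsymbol{X}}^t) - \hat{u}_k^t\bigr) \geq 0.
\end{equation*}
Isolating the $k$-th summand and applying the Step-1 bound to each remaining summand produces
\begin{equation*}
\lambda_k\bigl(\hat{u}_k^t - u_k(\hat{\boldsymbol{X}}^t)\bigr) \leq \sum_{j\neq k}\lambda_j \frac{\mathcal{R}_j(t)}{t}.
\end{equation*}
Dividing by $\lambda_k$ and combining with Step 1 delivers both directions of claim (ii), after the crude relaxation that reinserts the (non-negative) $j=k$ term into the sum.

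Finally, for claim (i), fix an arbitrary deviation $\boldsymbol{x}'_k$ and decompose
\begin{equation*}
u_k(\boldsymbol{x}'_k,\hat{\boldsymbol{X}}_{-k}^t) - u_k(\hat{\boldsymbol{X}}^t) = \bigl[u_k(\boldsymbol{x}'_k,\hat{\boldsymbol{X}}_{-k}^t) - \hat{u}_k^t\bigr] + \bigl[\hat{u}_k^t - u_k(\hat{\boldsymbol{X}}^t)\bigr].
\end{equation*}
The first bracket is at most $\mathcal{R}_k(t)/t$ by Step 1 applied with $\boldsymbol{z}=\boldsymbol{x}'_k$, and the second is bounded by the conclusion of Step 2. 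Rewriting $\mathcal{R}_k(t)/t = \lambda_k\mathcal{R}_k(t)/(\lambda_k t)$ fuses the two pieces into $\frac{1}{\lambda_k}\sum_j \lambda_j \mathcal{R}_j(t)/t$, which is then upper bounded by $\epsilon^t$ using $\lambda_k \geq \lambda_{\min}$. The principal subtlety I anticipate is the insistence on a time-invariant comparator $\boldsymbol{z}$ when invoking the no-regret guarantee — otherwise the regret bound cannot be applied round-by-round — together with a careful reading of Definition~\ref{def:sc-concv} as \emph{joint} concavity of $\sigma$ in $\boldsymbol{X}$, since coordinate-wise concavity alone would not license the Jensen step that drives Step 2.
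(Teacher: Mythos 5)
The paper does not actually prove this theorem---it is quoted as an auxiliary result directly from \cite{even2009convergence}---so there is no in-paper argument to compare against. Your proof is correct and reproduces the standard argument of that reference: the one-sided bound combining convexity of $u_k$ in $\boldsymbol{X}_{-k}$ with the external-regret guarantee against the fixed comparator $\boldsymbol{z}$, the Jensen step on the jointly concave welfare $\sigma(\cdot,\boldsymbol{\lambda})$ to reverse the inequality for claim (ii), and the two-term decomposition for the $\epsilon$-Nash claim are all exactly right, including your observation that \emph{joint} (not coordinatewise) concavity of $\sigma$ is what licenses the Step-2 Jensen argument.
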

\begin{lemma}[\cite{even2009convergence}]
\label{Lm:SCConvergence}
Consider a socially concave game $\Gamma$ with $K$ players. If every player $k$ plays according to a procedure with non-external regret so that $\lim_{T \to \infty}\frac{1}{T}\mathcal{R}_T = 0$, then the players' joint action profile converges to a Nash equilibrium.
\end{lemma}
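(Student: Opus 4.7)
The plan is to chain three results already established in the paper: the social-concavity structure of the game (Proposition \ref{prop:ta-sc}), the sublinear regret guarantees of the two algorithms (Propositions \ref{pr:RegretBGAM} and \ref{pr:RegretLBWI}), and the generic convergence lemma for socially concave games under no-regret dynamics (Lemma \ref{Lm:SCConvergence}). Uniqueness of the limit point then comes from Proposition \ref{prop:ta-ue}. So the proof is essentially a verification that the hypotheses of Lemma \ref{Lm:SCConvergence} are met, followed by an appeal to uniqueness; I do not expect to need any fresh game-theoretic or probabilistic machinery.

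First I would state that Proposition \ref{prop:ta-sc} certifies the task allocation game $\Gamma$ as socially concave, which is exactly the structural hypothesis needed for the convergence lemma. Then I would treat the two algorithmic cases separately but in parallel. For BGAM, Proposition \ref{pr:RegretBGAM} gives $\mathbb{E}[\mathcal{R}_k(T)] = \tilde{\mathcal{O}}(T^{3/4})$, hence $\mathcal{R}_k(T)/T = \tilde{\mathcal{O}}(T^{-1/4}) \to 0$ for every player $k \in \mathcal{K}$. For LBWI, Proposition \ref{pr:RegretLBWI} gives $\mathbb{E}[\mathcal{R}_k(T)] = \tilde{\mathcal{O}}(T^{5/6})$, hence $\mathcal{R}_k(T)/T = \tilde{\mathcal{O}}(T^{-1/6}) \to 0$. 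In either case each player's time-averaged regret vanishes, which is precisely the no-external-regret condition required by Lemma \ref{Lm:SCConvergence}.

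Second, I would invoke Lemma \ref{Lm:SCConvergence} directly: since $\Gamma$ is socially concave and every fog node follows a procedure whose averaged regret tends to zero, the joint action profile $\boldsymbol{X}^t = (\boldsymbol{x}_1^t, \ldots, \boldsymbol{x}_K^t)$ converges to some Nash equilibrium of $\Gamma$. To upgrade this to convergence to \emph{the} unique Nash equilibrium, I would then apply Proposition \ref{prop:ta-ue}, which asserts that any equilibrium the game converges to is unique. Concatenating these two statements yields the claim.

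The only genuinely delicate point I foresee is the LBWI case, whose regret bound holds with probability at least $1 - 1/T$ rather than deterministically or in expectation unconditionally. I would resolve this by noting that the utility function is uniformly bounded (it is Lipschitz over the compact set $[0,1]^M$ by Proposition \ref{prop:lips}), so the contribution of the failure event to the expected time-averaged regret is $\mathcal{O}(1/T)$ and is dominated by the main $\tilde{\mathcal{O}}(T^{-1/6})$ term. Thus the expected-regret version of the no-regret hypothesis still holds, and Lemma \ref{Lm:SCConvergence} applies without modification. Alternatively, one can invoke Borel--Cantelli along a dyadic subsequence of horizons to obtain an almost-sure version, but the boundedness argument is simpler and suffices for the stated convergence result.
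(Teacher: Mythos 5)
There is a genuine gap: you have not proved the stated lemma at all. The statement in question is Lemma~\ref{Lm:SCConvergence} itself --- the generic assertion, imported from \cite{even2009convergence}, that in \emph{any} socially concave game no-regret play drives the joint action profile to a Nash equilibrium. Your argument instead proves Proposition~\ref{pr:Convergence} (that BGAM and LBWI converge in the specific task allocation game), and it does so by ``invoking Lemma~\ref{Lm:SCConvergence} directly.'' Read as a proof of the lemma, that is circular; read as a proof of Proposition~\ref{pr:Convergence}, it is essentially the paper's own Appendix~\ref{Sec:Proposition8} (your remark about converting the high-probability LBWI bound into an expected-regret bound via boundedness of the utility is a nice touch the paper omits), but it is not the statement you were asked to establish.

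A proof of the lemma cannot use any property of the fog-computing utilities; it must work from Definition~\ref{def:sc-concv} alone. The standard route, which underlies Theorem~\ref{the:eps-equ}, is: condition (i) (concavity of the weighted welfare $\sigma(\boldsymbol{X},\boldsymbol{\lambda})$ in the joint profile) combined with each player's regret bound shows that the time-averaged profile $\hat{\boldsymbol{X}}^t$ has weighted welfare within $\sum_i \lambda_i \mathcal{R}_i/t$ of the welfare obtained when every player best-responds to $\hat{\boldsymbol{X}}^t$; condition (ii) (convexity of $u_k$ in $\boldsymbol{X}_{-k}$) lets one pass, via Jensen's inequality, from round-averaged utilities to utilities evaluated at the averaged profile; together these give that $\hat{\boldsymbol{X}}^t$ is an $\epsilon^t$-Nash equilibrium with $\epsilon^t = \frac{1}{\lambda_{\min}}\sum_{i}\frac{\lambda_i \mathcal{R}_i}{t} \to 0$ under the sublinear-regret hypothesis. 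One then concludes by compactness of $\otimes_{k}\mathcal{S}_k$ and continuity of the $u_k$ that every limit point of the averaged profile is an exact Nash equilibrium. None of these steps appears in your proposal, so as written it does not prove the lemma.
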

\begin{lemma}[\cite{lucas2018aggregated}]
\label{lem:vt-b}
Let $u^{1},u^{2},\cdots,u^{T}: \mathcal{S} \rightarrow \mathbb{R}$ be a sequence of concave and differentiable functions. Besides, $g^{1},g^{2},\cdots,g^{T}$ are the single-point estimation of gradient with $g^{t} = \nabla \hat{u}^{t}(y^{t})$ and $\norm{g^{t}} \leq G$. Also, $g_{1:T} = [g^1, \ldots, g^T]$, and $\beta$ is a constant. Select $\nu^{t}=\frac{\nu}{\sqrt{t}}$. Then we have
\begin{gather*}
\sum_{t=1}^T \nu^t (v^t)^2 = \sum_{t=1}^T \frac{\nu (v^t)^2}{\sqrt{t}} \leq \norm{g_{1:T}}_4^2 \sqrt{1+\log T} \frac{\nu}{(1-\beta)^2}.
\end{gather*}
\end{lemma}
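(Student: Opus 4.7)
The plan is to follow the classical AggMo analysis: unroll the momentum recursion into a discounted sum of past gradients, apply Cauchy--Schwarz twice (once inside $(v^t)^2$, once across time), and then estimate the resulting double sum by a harmonic-series bound.

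First I would write out the closed form of $v^t$. Since the algorithm sets $v^1=0$ and updates $v^t=\beta v^{t-1}+g^t$ (the $\sigma$-scaling is absorbed into $g^t$ in the lemma's notation), telescoping yields
\begin{equation*}
v^t \;=\; \sum_{s=1}^{t} \beta^{\,t-s}\, g^s .
\end{equation*}
Applying Cauchy--Schwarz with the decomposition $\beta^{t-s}=\beta^{(t-s)/2}\cdot\beta^{(t-s)/2}$ gives
\begin{equation*}
(v^t)^2 \;\le\; \Bigl(\sum_{s=1}^{t}\beta^{\,t-s}\Bigr)\Bigl(\sum_{s=1}^{t}\beta^{\,t-s}(g^s)^2\Bigr)\;\le\;\frac{1}{1-\beta}\sum_{s=1}^{t}\beta^{\,t-s}(g^s)^2 ,
\end{equation*}
using the geometric series bound $\sum_{j\ge 0}\beta^j\le (1-\beta)^{-1}$. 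Substituting into the target sum and swapping the order of summation via $\sum_{t=1}^{T}\sum_{s=1}^{t}=\sum_{s=1}^{T}\sum_{t=s}^{T}$, I obtain
\begin{equation*}
\sum_{t=1}^{T}\frac{\nu\,(v^t)^2}{\sqrt t}\;\le\;\frac{\nu}{1-\beta}\sum_{s=1}^{T}(g^s)^2\,a_s, \qquad a_s:=\sum_{t=s}^{T}\frac{\beta^{\,t-s}}{\sqrt t}.
\end{equation*}

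Next I would extract the $\ell_4$-norm of the gradient sequence by a second Cauchy--Schwarz across $s$:
\begin{equation*}
\sum_{s=1}^{T}(g^s)^2 a_s \;\le\; \Bigl(\sum_{s=1}^{T}(g^s)^4\Bigr)^{\!1/2}\Bigl(\sum_{s=1}^{T}a_s^2\Bigr)^{\!1/2}
\;=\;\|g_{1:T}\|_4^{2}\Bigl(\sum_{s=1}^{T}a_s^2\Bigr)^{\!1/2}.
\end{equation*}
The remaining task is to show $\sum_{s=1}^{T} a_s^2\le (1-\beta)^{-2}(1+\log T)$. I would again bound $a_s^2$ by Cauchy--Schwarz,
\begin{equation*}
a_s^2\;\le\;\Bigl(\sum_{t=s}^{T}\beta^{\,t-s}\Bigr)\Bigl(\sum_{t=s}^{T}\frac{\beta^{\,t-s}}{t}\Bigr)\;\le\;\frac{1}{1-\beta}\sum_{t=s}^{T}\frac{\beta^{\,t-s}}{t},
\end{equation*}
then swap summation order once more and use $\sum_{s=1}^{t}\beta^{t-s}\le(1-\beta)^{-1}$ together with $\sum_{t=1}^{T}1/t\le 1+\log T$ to get exactly $(1+\log T)/(1-\beta)^2$. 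Combining all pieces yields the stated bound $\|g_{1:T}\|_4^{2}\sqrt{1+\log T}\cdot \nu/(1-\beta)^2$.

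The main obstacle is purely bookkeeping: getting both powers of $(1-\beta)^{-1}$ in the final denominator requires that each Cauchy--Schwarz step peels off exactly one factor of the geometric tail, and the harmonic factor $\sqrt{1+\log T}$ must come out under a square root so that the outer Cauchy--Schwarz matches it against $\|g_{1:T}\|_4^{2}$ rather than $\|g_{1:T}\|_2^{2}$. No convexity or concavity of $u^t$ is actually invoked -- those hypotheses are carried over from the parent result and are not needed for this deterministic inequality.
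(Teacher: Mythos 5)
Your argument is correct, but note that the paper does not actually prove this statement: Lemma~\ref{lem:vt-b} is imported verbatim from \cite{lucas2018aggregated} and used as a black box in Appendix~\ref{Sec:Proposition5}, so there is no in-paper proof to match against. Your derivation is a sound self-contained reconstruction: unrolling the recursion to $v^t=\sum_{s\le t}\beta^{t-s}g^s$, splitting $\beta^{t-s}=\beta^{(t-s)/2}\beta^{(t-s)/2}$ to get $(v^t)^2\le(1-\beta)^{-1}\sum_{s\le t}\beta^{t-s}(g^s)^2$, swapping summation order, and then matching $\sum_s (g^s)^2 a_s$ against $\norm{g_{1:T}}_4^2\,(\sum_s a_s^2)^{1/2}$ with $\sum_s a_s^2\le(1-\beta)^{-2}(1+\log T)$ reproduces the stated constant exactly, and each factor of $(1-\beta)^{-1}$ is accounted for. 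Two small points worth making explicit. First, every geometric-tail bound requires $0\le\beta<1$; the lemma only calls $\beta$ ``a constant,'' so you should state this hypothesis (it is implicit in the $(1-\beta)^{-2}$ of the conclusion and in the momentum interpretation). Second, the paper's update \eqref{eq:v} is $v^t=\beta v^{t-1}+\sigma\nabla\hat u^t(y^t)$ while the lemma's $g^t$ is $\nabla\hat u^t(y^t)$ without the $\sigma$; your remark that the scaling is absorbed into $g^t$ is the right reading, but it is an inconsistency inherited from the paper rather than resolved by it. Your closing observation that concavity and differentiability of the $u^t$ play no role in this purely algebraic inequality is also correct.
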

\begin{definition}
\label{Def:Hes}
$H$ is a uniform bound for the Hessians of utility function $u$ if
\begin{gather*}
|\langle D^2u_k(\boldsymbol{x}_k)\boldsymbol{y},\boldsymbol{y} \rangle|\leq H \norm{\boldsymbol{y}}_{\infty}^2,
\end{gather*}
for any $\boldsymbol{x}_k \in [0,1]^M, \boldsymbol{y} \in [0,1]^M$.
\end{definition}
\begin{lemma}[\cite{bubeck2011lipschitz}]
\label{lem:apxl}
Let $N$ be the number of discritization intervals in the first phase of LBWI strategy. For $N \geq 3$, with probability at least $1-\frac{1}{T}$,
\begin{gather*}
L - \frac{7H}{N} \leq \tilde{L} \leq L + 2N\sqrt{\frac{2}{\mathcal{A}}\ln(2NT)},
\end{gather*}
where $H$ is defined in \textbf{Definition \ref{Def:Hes}}, $L$ is the Lipschitz constant, and $\tilde{L}$ is given by \eqref{eq:hatL}. Besides, $\mathcal{A}$ is the number of times each
arm is pulled independently.
\end{lemma}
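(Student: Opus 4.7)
The plan is to prove the two-sided bound on $\tilde{L}$ by combining a Hoeffding-type concentration argument for the empirical bin means $\hat{\mu}^{T_1}[n]$ with a Taylor-expansion argument that relates finite differences of the true bin means $\mu[n] := \mathbb{E}_{Y\sim \mathrm{Uniform}(n/N,(n+1)/N)}[u(Y)]$ to the Lipschitz constant $L$. The additive correction $N\sqrt{(2/\mathcal{A})\ln(2NT)}$ in the definition of $\tilde{L}$ is precisely an upper-confidence-bound-style inflation, so the upper direction will follow from concentration plus Lipschitzness of $u$, while the lower direction requires controlling the discretization bias by $H$.

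First I would carry out the concentration step. Since during Phase I each arm $n\in\{0,\ldots,N-1\}$ is pulled $\mathcal{A}$ times independently and the utility is bounded, Hoeffding's inequality gives $\Pr(|\hat{\mu}^{T_1}[n]-\mu[n]|>t)\le 2\exp(-2\mathcal{A}t^2)$. Setting $t=\sqrt{\ln(2NT)/(2\mathcal{A})}$ and applying a union bound over the $N$ arms yields failure probability at most $1/T$, so on the good event $\mathcal{E}$ we have $|\hat{\mu}^{T_1}[n]-\mu[n]|\le t$ for every $n$, and $2Nt=N\sqrt{2\ln(2NT)/\mathcal{A}}$ is exactly the inflation term in $\tilde{L}$.

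For the upper bound, on $\mathcal{E}$ the triangle inequality gives $N|\hat{\mu}^{T_1}[n]-\hat{\mu}^{T_1}[n+i]|\le N|\mu[n]-\mu[n+i]|+2Nt$. Since the bin centers lie $1/N$ apart and $u$ is $L$-Lipschitz, a coupling/averaging argument yields $N|\mu[n]-\mu[n+i]|\le L$; taking the max over $(n,i)$ gives $\hat{L}\le L+2Nt$, and adding the offset $N\sqrt{2\ln(2NT)/\mathcal{A}}$ produces $\tilde{L}\le L+2N\sqrt{2\ln(2NT)/\mathcal{A}}$, as required. For the lower bound, pick $x^*\in[0,1]$ with $|u'(x^*)|=L$ (it exists by smoothness and compactness), let $n^*$ be the bin-pair with $x^*\in[c_{n^*},c_{n^*+1}]$ where $c_n=(n+1/2)/N$, and apply a second-order Taylor expansion of $u$ about each center: the uniform Hessian bound $H$ gives $|\mu[n]-u(c_n)|\le H/(24N^2)$, hence $|N(\mu[n^*+1]-\mu[n^*])-(u(c_{n^*+1})-u(c_{n^*}))N|\le H/(12N)$. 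The mean value theorem writes $u(c_{n^*+1})-u(c_{n^*})=u'(\xi)/N$ for some $\xi$ within distance $1/N$ of $x^*$, and the Hessian bound yields $|u'(\xi)-u'(x^*)|\le H/N$. Combining these gives $N|\mu[n^*+1]-\mu[n^*]|\ge L-cH/N$ for a small explicit constant $c$; then on $\mathcal{E}$ we get $\hat{L}\ge N|\hat{\mu}^{T_1}[n^*]-\hat{\mu}^{T_1}[n^*+1]|\ge L-cH/N-2Nt$, so adding back the offset yields $\tilde{L}\ge L-cH/N$, with careful bookkeeping of the edge-case when $x^*$ is near an interval endpoint (forcing one to use a neighboring bin pair) absorbing the remaining constants to produce the stated $7H/N$.

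The main obstacle is the lower bound: the upper bound is a routine Hoeffding-plus-Lipschitz sandwich, but showing $\tilde{L}\ge L-7H/N$ requires a localization step around a maximizer of $|u'|$ and a precise accounting of three distinct errors that all scale like $H/N$ (the bias of $\mu[n]$ against $u(c_n)$ from Taylor's remainder, the variation of $u'$ across a bin of width $1/N$ also controlled by $H$, and potential boundary effects when $x^*$ falls near $0$ or $1$). Tracking these through the finite-difference estimator and showing that their sum never exceeds $7H/N$ is where the condition $N\ge 8H/L$ from Proposition 7 implicitly helps ensure that the Taylor regime is valid, and this is the delicate part of the argument.
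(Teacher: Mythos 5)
The paper does not actually prove this lemma: it is imported verbatim from \cite{bubeck2011lipschitz} as an auxiliary result in Appendix~\ref{Sec:Auxiliary}, so there is no in-paper proof to compare against. Your reconstruction follows the standard argument of the cited reference and is sound in structure: Hoeffding plus a union bound over the $N$ arms with $t=\sqrt{\ln(2NT)/(2\mathcal{A})}$ gives failure probability $1/T$ and makes $2Nt$ exactly the inflation term in $\tilde{L}$; the coupling argument gives $N\lvert\mu[n]-\mu[n+i]\rvert\le L$ and hence the upper bound; and for the lower bound, localizing at a maximizer of $\lvert u'\rvert$, the three $H/N$-scale errors you identify (Taylor bias of each bin mean, at most $H/(24N^2)$ each; drift of $u'$ over a distance of at most $3/(2N)$ at the boundary; concentration, cancelled by the added offset) sum to roughly $19H/(12N)$, comfortably within the stated $7H/N$. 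Two small corrections: the hypothesis that makes the lower-bound localization work is the lemma's own condition $N\ge 3$ (which guarantees an adjacent bin pair whose centers are within $3/(2N)$ of $x^*$ even when $x^*$ sits at an endpoint), not the condition $N\ge 8H/L$ — that condition belongs to Proposition~\ref{pr:RegretLBWI}, where it is used only to deduce $\tilde{L}\ge L/8$ from this lemma's conclusion. Also, in this paper's setting the observed utilities carry additive Gaussian noise, so strictly you need a sub-Gaussian concentration inequality rather than Hoeffding for bounded variables; this is an issue the paper inherits from citing the lemma as-is, not a flaw you introduced, but it is worth stating the boundedness (or sub-Gaussianity) assumption explicitly in the concentration step.
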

\subsection{Proof of Proposition \ref{prop:lips}}
\label{Sec:Proposition1}
The gradient of utility function defined in (\ref{eq:util}) is 
\begin{gather}
\nabla u_k = \left[\nabla u_{k,1},\cdots, \nabla u_{k,M} \right]^T ,\label{eq:util-grad}
\end{gather}
where $\nabla u_{k,m} = \frac{X_{-k,m}}{(x_{k,m}+X_{-k,m})^2}\exp(-\frac{x_{k,m}}{\rho_{k,m}(x_{k,m}+X_{-k,m})}) + \epsilon_{k,m} - \kappa_{k,m}$ and $X_{-k,m} = \sum_{i\in \mathcal{K} \backslash {k}} x_{i,m}$. Because $\exp(-\frac{x_{k,m}}{\rho_{k,m}(x_{k,m}+X_{-k,m})}) \leq 1$ and $\frac{X_{-k,m}}{(x_{k,m}+X_{-k,m})^2} \leq \frac{1}{X_{-k,m}}$, there exists an $L$, such that $|\nabla u_{k,m}|<L$, $\forall x_{k,m} \in [0,1], k \in \mathcal{K}, m \in \mathcal{M}$.\footnote{The \textit{barrier to entry} in denominator of $a_{k,m}$ will avoid the extreme value of $L$.} That completes the proof.
\subsection{Proof of Proposition \ref{prop:conc-conv}}
\label{Sec:Proposition2}
According to \textbf{Theorem~\ref{the:conv-conc}}, at first we calculate the Hessian matrix of $u_k$ in $\boldsymbol{x}_k$ as
\begin{gather*}
D^2u_k(\boldsymbol{x_k}) = \left[ \begin{array}{cccc}
\frac{\partial^2u_k}{\partial x_{k,1}^2} & 0 & \ldots & 0 \\
\vdots & \vdots & \ddots & \vdots \\
0 & 0 & \ldots & \frac{\partial^2u_k}{\partial x_{k,M}^2}
\end{array}\right]
\end{gather*}
where $\frac{\partial^2u_k}{\partial x_{k,m}^2} = -e^{\frac{x_{k,m}}{\rho_{k,m}(x_{k,m}+X_{-k,m})}}(\frac{2X_{-k,m}}{(x_{k,m}+X_{-k,m})^3} + \frac{X_{-k,m}^2}{\rho_{k,m}(x_{k,m}+X_{-k,m})^4})$ is non-positive $\forall~m \in \mathcal{M}$. Thus for any $h \in \mathbb{R}^M$, it satisfies
$\langle D^2u_k(\boldsymbol{x}_k), h \rangle \leq 0$, i.e., $u_k(\boldsymbol{x}_k,\boldsymbol{X}_{-k})$ is concave in $\boldsymbol{x}_k$.

Similarly, the Hessian matrix of $u_k$ in $\boldsymbol{X}_{-k}$ is
\begin{gather*}
D^2u_k(\boldsymbol{X}_{-k}) = \left[ \begin{array}{cccc}
\frac{\partial^2u_k}{\partial X_{-k,1}^2} & 0 & \ldots & 0 \\
\vdots & \vdots & \ddots & \vdots \\
0 & 0 & \ldots & \frac{\partial^2u_k}{\partial X_{-k,M}^2}
\end{array}\right]
\end{gather*}
where $\frac{\partial^2 \varphi}{\partial X_{-k,m}^2} =  e^{-\frac{x_{k,m}}{\rho_{k,m}(x_{k,m}+X_{-k,m})}}\frac{2x_{k,m} X_{-k,m}+(2\rho_{k,m} - 1)x_{k,m}^2}{\rho_{k,m}(x_{k,m}+X_{-k,m})^4}$ is non-negative $\forall~m\in \mathcal{M}$ by assuming $\rho_{\min} > 0.5$. Thus for any $h\in \mathbb{R}^M$, it satisfies $
\langle D^2u_k(\boldsymbol{X}_{-k}), h \rangle \geq 0$, i.e., $u_k$ is convex in $\boldsymbol{X}_{-k}$. That completes the proof.
\subsection{Proof of Proposition \ref{prop:ta-sc}}
\label{Sec:Proposition3}
By \textbf{Proposition~\ref{prop:conc-conv}}, the utility function $u_k$ is concave in $\boldsymbol{x}_k$, the proposal of fog node $k$, and convex in the actions of other fog nodes $\boldsymbol{X}_{-k}$, which satisfies the two conditions in \textbf{Definition~\ref{def:sc-concv}}. Therefore, the task allocation game $\Gamma$ is a socially concave game. Besides, the utility function $u_k$ is twice differentiable and therefore, by \textbf{Lemma \ref{lem:cc}}, $\Gamma$ is also a concave game (\textbf{Definition \ref{def:concv}}). 
\subsection{Proof of Proposition \ref{prop:ta-ue}}
\label{Sec:Proposition4}
Define a weighted sum of the utility functions in task allocation game $\Gamma$ as 
\begin{gather*}
\sigma(\boldsymbol{X}, \boldsymbol{r}) = \sum_{k=1}^K r_k u_k(\boldsymbol{X})= \sum_{k=1}^K r_k u_k(\boldsymbol{x}_k,\boldsymbol{X}_{-k}),
\end{gather*}
where $\forall k, r_k > 0$. The pseudogradient of $\sigma(\boldsymbol{X}, \boldsymbol{r})$ is defined by \eqref{eq:pg}, where $\nabla_k u_k(\boldsymbol{X}) = \nabla u_k(\boldsymbol{x}_k)$, which is  given by (\ref{eq:util-grad}) with 
\begin{align*}
    \nabla u_{k,m} = \frac{X_{-k,m}}{(x_{k,m}+X_{-k,m})^2}&\exp(-\frac{x_{k,m}}{\rho_{k,m}(x_{k,m}+X_{-k,m})}) \\
    & + \epsilon_{k,m} - \kappa_{k,m}. 
\end{align*}
To prove the uniqueness of Nash equilibrium, we need to prove the inequality~(\ref{eq:dsc}), which is equivalent to the following inequality in task allocation game $\Gamma$:
\begin{gather}
\sum_{k} \sum_{m} r_k(x_{k,m}^1-x_{k,m}^0)(\nabla u_{k,m}(x_{k,m}^1)-u_{k,m}(x_{k,m}^0)) < 0. \notag
\end{gather}
As $\frac{\partial^2 u_{k,m}}{\partial x_{k,m}^2} = -e^{-\frac{x_{k,m}}{\rho_{k,m}(x_{k,m}+X_{-k,m})}}(\frac{2X_{-k,m}}{(x_{k,m}+X_{-k,m})^3} + \frac{(X_{-k,m})^2}{\rho_{k,m}(x_{k,m}+X_{-k,m})^4}) < 0$, $\forall x_{k,m} \in [0,1]$. Thus, by the mean value theorem, there exists $x'_{k,m} \in [x_{k,m}^0,x_{k,m}^1]$ for any $x_{k,m}^0 \neq x_{k,m}^1$ so that
\begin{gather*}
\frac{\nabla u_{k,m}(x_{k,m}^1)-\nabla u_{k,m}(x_{k,m}^0)}{x_{k,m}^1-x_{k,m}^0} 
= \frac{\partial^2 u_{k,m}}{\partial x_{k,m}^{'2}} < 0.
\end{gather*}
That is then equivalent to 
\begin{gather}
r_k(x_{k,m}^1-x_{k,m}^0)(\nabla u_{k,m}(x_{k,m}^1) - \nabla u_{k,m}(x_{k,m}^0))< 0,  \notag
\end{gather}
for all $r_k>0$. Summing up with $k \in \mathcal{K}, m \in \mathcal{M}$, we conclude
\begin{gather}
\sum_{k} \sum_{m} r_k (x_{k,m}^1-x_{k,m}^0)(\nabla u_{k,m}(x_{k,m}^1)-\nabla u_{k,m}(x_{k,m}^0)) < 0. \notag
\end{gather}
According to \textbf{Definition~\ref{def:dsc}}, $\sigma(\boldsymbol{x}, \boldsymbol{r})$ is diagonally strictly concave. Therefore, by \textbf{Theorem \ref{the:uniq-ne}}, Nash equilibrium of task allocation game $\Gamma$ is unique. That completes the proof.
\subsection{Proof of Proposition \ref{pr:RegretBGAM}}
\label{Sec:Proposition5}
Because $u^t$ is the same as the utility function with the shifted action space, the regret
\begin{gather}
R_{k,m}^T(x_{k,m}) = \max_{x_{k,m}}~\sum_{t=1}^T u_{k,m}^t(x_{k,m}) - \sum_{t=1}^Tu_{k,m}^t(x_{k,m}^t) \notag
\end{gather}
equals to
\begin{gather}
R^T(z) = \max_z~\sum_{t=1}^T u^t(z) - \sum_{t=1}^T u^t(z^t). \label{eq:rgt-bgam}
\end{gather}
Thus we use the regret $R^T(z)$ in (\ref{eq:rgt-bgam}). By (\ref{eq:est-z}) we have \cite{flaxman2004online} $\hat{u}^t(y) = \mathbb{E}[u^t(z^t)]$, which results
\begin{gather}
\max_z~\mathbb{E}[\sum_{t=1}^T u^t(z)]=\max_{(y+\sigma c)}~\mathbb{E}[\sum_{t=1}^T u^t(y+\sigma c)]=\max_y \sum_{t=1}^T \hat{u}^t(y) \notag
\end{gather}
Then, the expectation of regret $R^T(z)$ will be 
\begin{gather}
\mathbb{E}[R^T(z)] = \max_y \sum_{t=1}^T \hat{u}^t(y) - \sum_{t=1}^T \hat{u}^t(y^t).
\end{gather}
Define $y^*$ as the optimal $y$ that maximize $\hat{u}^t(y)$. We can bound the expected difference between $\hat{u}^t(y^t)$ and $\hat{u}^t(y^*)$ in terms of gradient with \cite{flaxman2004online}
\begin{gather}
\mathbb{E}[\hat{u}^t(y^*) - \hat{u}^t(y^t)] \leq \mathbb{E}[g^t(y^* - y^t)]. 
\label{eq:ug}
\end{gather}
From the update rules (\ref{eq:v}) and (\ref{eq:y}), we can conclude that
\begin{gather*}
y^{t+1} = P_{(1-\alpha)S}(y^t +\nu^t (\beta v^{t-1}+g^t)).
\end{gather*}
By Zinkevich and Flaxman's analysis \cite{zinkevich2003online,flaxman2004online}, we have $\norm{y^* - P_s(y)} \leq  \norm{y^* - y}$. Hence, 
\begin{align*}
\norm{y^* - y^{t+1}}^2 
&\leq \norm{y^* - y^t}^2 -2\nu^t (\beta v^{t-1}+g^t)(y^* -y^t) \notag \\
& \quad \quad + (\nu^t v^t)^2.
\end{align*}
Rearranging the expression, we get
\begin{align}
g^t(y^*-y^t) \leq  & \frac{1}{2\nu^t} (\norm{y^* - y^t}^2 - \norm{y^* - y^{t+1}}^2) \notag \\
& \quad -\beta v^{t-1}(y^* -y^t) + \frac{\nu^t}{2} (v^t)^2. 
\label{eq:ine-gt}
\end{align}
By summing inequality~\eqref{eq:ine-gt} and combining it with the inequality~(\ref{eq:ug}), we arrive at 
\begin{align*}
&\sum_{t=1}^T \mathbb{E}[\hat{u}^t(y^*) - \hat{u}^t(y^t)] \leq \sum_{t=1}^T  \mathbb{E}[g^t(y^*-y^t)] \\
\leq & \frac{D^2}{2\nu^1} + \frac{D^2}{2} (\frac{1}{\nu^t} - \frac{1}{\nu}) + \sum_{t=1}^T \frac{\nu^t}{2} (v^t)^2 + \sum_{t=1}^T D\beta v^t \\
\leq & \frac{D^2(1+\sqrt{T})}{2\nu} + \sum_{t=1}^T \frac{\nu^t}{2} (v^t)^2 + \sum_{t=1}^T D\beta v^t.
\end{align*}
The second inequality follow from the bounding assumptions $\norm{y^i - y^j}_2 \leq D$. From (\ref{eq:egrad}), it follows that
\begin{gather*}
\norm{g^t} = \norm{\frac{1}{\sigma} u^t(y^t + \sigma c^t)c^t} \leq \frac{U}{\sigma},
\end{gather*}
where $U$ is the maximum value of utility function. According to \eqref{eq:v}, we can have $v^t \leq \frac{U(1-\beta^{t-1})}{1-\beta}$, thus,
\begin{gather*}
    D\beta\sum_{t=1}^T v^t \leq \frac{D \beta U (1 - \beta^T)}{(1-\beta)^2} \leq \frac{D \beta U}{(1-\beta)^2}. 
\end{gather*}
According to \textbf{Lemma~\ref{lem:vt-b}}, we can conclude
\begin{align}
\sum_{t=1}^T \mathbb{E}[\hat{u}^t(y^*) - \hat{u}^t(y^t)] & \leq  \frac{\nu \sqrt{1+ \log T}}{2} \norm{g_{1:T}}_4^2 \frac{1}{(1-\beta)^2} \notag \\
& \quad +\frac{D \beta U}{(1-\beta)^2}+\frac{D^2(1+\sqrt{T})}{2\nu}. 
\label{eq:regt-y}
\end{align}
According to references \cite{lucas2018aggregated,duchi2011adaptive}, $\norm{g_{1:T}}_4^2 \leq \frac{U}{\sigma} \sqrt{T}$. Therefore, we have
\begin{align}
&\max_{y \in (1-\alpha)S}\sum_{t=1}^T\hat{u}^t(y)  - \mathbb{E}[\sum_{t=1}^T \hat{u}^t(y^t)] \notag \\ & \leq \frac{D^2(1+\sqrt{T})}{2\nu} +(D\beta U+ \frac{\nu U \sqrt{T}\sqrt{1+ \log T}}{2\sigma})\frac{1}{(1-\beta)^2}. 
\label{eq:regt-yT}
\end{align}
In \textbf{Proposition~\ref{prop:lips}}, the utility function has a Lipschitz constant $L$. Hence \cite{flaxman2004online}, $|\hat{u}^t(y^t) - u^t(y^t)| \leq \sigma L$, $|\hat{u}^t(y^t) - u^t(z^t)| \leq  2\sigma L$. These imply
\begin{align}
\max_{z \in (1-\alpha)S}\sum_{t=1}^T&(u^t(z)-\sigma L)  - \mathbb{E}[\sum_{t=1}^T (u^t(z^t)+2\sigma L)] \notag \\
\leq  \frac{D^2(1+\sqrt{T})}{2\nu} &+ (D\beta U+ \frac{\nu U \sqrt{T}\sqrt{1+ \log T}}{2\sigma})\frac{1}{(1-\beta)^2}, \notag
\end{align}
\begin{align}
&\max_{z \in (1-\alpha)S}~\sum_{t=1}^T u^t(z) - \mathbb{E}[\sum_{t=1}^T u^t(z^t)] \leq \frac{D^2(1+\sqrt{T})}{2\nu} \notag  \\
& +3\sigma LT  + (D\beta U+ \frac{\nu U \sqrt{T}\sqrt{1+ \log T}}{2\sigma})\frac{1}{(1-\beta)^2}. 
\label{eq:rgt-1}
\end{align}
Because $u^t$ is concave in $z^t$ and $0 \in S$,
\begin{align*}
\max_{z \in (1-\alpha)S}~u^t(z) &= \max_{z \in S}~ u^t((1-\alpha)z) \notag \\
&\geq \max_{z \in S}~(\alpha u^t(0) + (1-\alpha)u^t(z)) \notag \\
& \geq \max_{z \in S}~(\alpha (u^t(0)-u^t(z)) + u^t(z)) \notag \\
&\geq \max_{z \in S}~(-2\alpha U + u^t(z)).
\end{align*}
With the above inequality (\ref{eq:rgt-1}) we have
\begin{align*}
 &\max_{z \in S}~\sum_{t=1}^T u^t(z)-\mathbb{E}[\sum_{t=1}^T u^t(z^t)]  \leq 3\sigma LT + 2 \alpha UT  \\
 & \quad + \frac{D^2(1+\sqrt{T})}{2\nu}+(D\beta U+ \frac{\nu U \sqrt{T}\sqrt{1+ \log T}}{2\sigma})\frac{1}{(1-\beta)^2} 
\end{align*}
Select $\sigma = T^{-0.25}\sqrt{\frac{RUr}{3(Lr+U)}}, \alpha = \frac{\sigma}{r}$, the expected regret is bounded by
\begin{align}
\mathbb{E}[R(T)]  \leq  & \frac{D^2(1+\sqrt{T})}{2\nu} +2 \frac{U}{r}T^{\frac{3}{4}}\sqrt{\frac{RUr}{3(Lr+U)}}\notag \\
& + \frac{\nu U T^{\frac{3}{4}} \sqrt{1+ \log T}}{2\sqrt{\frac{RUr}{3(Lr+U)}}} \frac{1}{(1-\beta)^2} + \frac{D\beta U}{(1-\beta)^2}
\end{align}
%
\subsection{Proof of Proposition \ref{pro:Lip}}
\label{Sec:Proposition6}
According to \textbf{Proposition~\ref{prop:lips}}, the utility function of fog node $k$'s action satisfies the Lipschitz condition. Therefore, fog node $k$ with action space $\mathcal{S}_k \in [0,1]^M$ and utility function $u_k$ in task allocation game $\Gamma$ can be modelled as a Lipschitz bandit. That completes the proof.
\subsection{Proof of Proposition \ref{pr:RegretLBWI}}
\label{Sec:Proposition7}
Considering the average utility in $N$ discretized bins, the average value of $u_{k,m}$ indexed by $n=0,1,\ldots,N-1$ is
\begin{gather*}
\bar{u}_{k,m}[n] = \int_{\frac{n}{N}}^{\frac{n+1}{N}} u_{k,m}(x)dx.
\end{gather*}
Because the utility function is L-Lipschitz, we have \cite{bubeck2011lipschitz}
\begin{gather*}
\max_{x \in [0,1]}~u_{k,m}(x)-\max_n~\bar{u}_{k,m}(n) \leq \frac{L}{N}.
\end{gather*}
Therefore, the expected regret bound of exploration-exploitation strategy with $N$-dimensional discrete action space yields
\begin{gather}
\mathbb{E}[R(T)] \leq T\frac{L}{N} + \mathcal{R}(T,N),
\end{gather}
where $\mathcal{R}(T,N)$ is a function that depends on round number $T$ and Lipschitz constant $L$ to represent the regret bound under discrete condition. By substitute the parameters as given in our algorithm we conclude \cite{bubeck2011lipschitz},
\begin{align}
\mathbb{E}[R(T)] &\leq T_1 + \mathbb{E}[\frac{LT}{\tilde{N}} + \mathcal{R}(T-T_1,\tilde{N})] \notag \\
&= \mathcal{A}N + \mathbb{E}[\frac{LT}{\tilde{N}} + \mathcal{R}(T-\mathcal{A}N,\tilde{N})],
\end{align}
where $\mathcal{A}N$ is the regret caused by random selection in phase I and $\mathcal{R}(T-\mathcal{A}N,\tilde{N})$ is the regret generated by the EXP3 strategy in phase II. The initialization of the weight matrix in EXP3 does not influence the regret bound and $\mathcal{R}(T',N') = 2.63\sqrt{TN'\ln N'}$ according to \cite{auer2002nonstochastic}. With 
\begin{align}
\tilde{N} &= N \left \lceil{\frac{\tilde{L}^{\frac{2}{3}}T^{\frac{1}{3}}}{N}}\right \rceil  \leq  \tilde{L}^{\frac{2}{3}}T^{\frac{1}{3}}(1+\frac{N}{\tilde{L}^{\frac{2}{3}}T^{\frac{1}{3}}}),
\end{align}
and by substituting the parameters in the bound $\mathcal{R}(T',N') = 2.63\sqrt{TN'\ln N'}$, we arrive at 
\begin{align}
\mathcal{R}(T,\tilde{N}) &\leq r\sqrt{\tilde{L}^{\frac{2}{3}}T^{\frac{4}{3}}(1+\frac{N}{\tilde{L}^{\frac{2}{3}}T^{\frac{1}{3}}})\ln(\tilde{L}^{\frac{2}{3}}T^{\frac{1}{3}}(1+\frac{N}{\tilde{L}^{\frac{2}{3}}T^{\frac{1}{3}}}))} \notag \\
&\leq r\sqrt{\tilde{L}^{\frac{2}{3}}T^{\frac{4}{3}}e\ln(e\tilde{L}^{\frac{2}{3}}T^{\frac{1}{3}})} \leq r'\tilde{L}^{\frac{2}{3}}T^{\frac{5}{6}}.
\end{align}
The first inequality follows from $(1+\frac{N}{\tilde{L}^{\frac{2}{3}}T^{\frac{1}{3}}})< e$ whenever $N <(e-1)\tilde{L}^{\frac{2}{3}}T^{\frac{1}{3}}$. The second inequality is concluded from the inequality $\ln(ex) = 1+ \ln x \leq x$ whenever $x>0$. Besides, we have $r = 2.63$ and $r' = 2.63\sqrt{e} \approx 4.34$.
According to \textbf{Lemma~\ref{lem:apxl}} and inequality $(x_1 + \cdots + x_p)^r \leq x_1^r + \cdots + x_p^r$ \cite{bubeck2011lipschitz}, we can conclude the following: With probability at least $1-\frac{1}{T}$, it holds
\begin{align}
\mathcal{R}(T,\tilde{N}) &\leq r'\tilde{L}^{\frac{2}{3}}T^{\frac{5}{6}} \leq r'[L + 2N\sqrt{\frac{2}{\mathcal{A}}\ln(2NT)})]^{\frac{2}{3}}T^{\frac{5}{6}} \notag \\
&\leq r'T^{\frac{5}{6}}(L^{\frac{2}{3}} + (2N\sqrt{\frac{2}{\mathcal{A}}\ln(2NT)})^{\frac{2}{3}}).
\end{align}
Moreover, with probability at least $1-\frac{1}{T}$, we have
%
$\tilde{L} \geq L - \frac{7H}{N} \geq \frac{L}{8}$.
%
If $N \geq \frac{8H}{L}$, where $H$ is defined in \textbf{Definition \ref{Def:Hes}}. Therefore,
\begin{align*}
\frac{LT}{\tilde{N}} \leq \frac{LT}{\tilde{L}^{\frac{2}{3}}T^{\frac{1}{3}}} \leq \frac{LT^{\frac{2}{3}}}{\frac{L}{8}^{2/3}} = 4L^{\frac{1}{3}}T^{\frac{2}{3}}.
\end{align*}
Putting things together, we get
\begin{align}
&\mathbb{E}[R(T)] \leq T_1 + \mathbb{E}[\frac{LT}{\tilde{N}} + \mathcal{A}(T-T_1,\tilde{N})] \notag \\
&\leq \mathcal{A}N + 4L^{\frac{1}{3}}T^{\frac{2}{3}} + r'T^{\frac{5}{6}}(L^{\frac{2}{3}} + (2N\sqrt{\frac{2}{\bar{\mathcal{A}}}\ln(2NT)})^{\frac{2}{3}}),
\end{align}
which can be concluded as the following: With probability at least $1-\frac{1}{T}$,
%
$\mathbb{E}[R(T)] \leq \tilde{\mathcal{O}}(T^{\frac{5}{6}})$.
%
\subsection{Proof of Proposition \ref{pr:Convergence}}
\label{Sec:Proposition8}
By \textbf{Proposition~\ref{prop:ta-sc}}, the task allocation game is a socially concave game. By \textbf{Proposition~\ref{prop:ta-ue}}, if the game converges to a Nash equilibrium, then that equilibrium is the unique equilibrium of the game. Thus, by \textbf{Theorem \ref{Lm:SCConvergence}}, to achieve equilibrium for every fog node $k \in \mathcal{K}$, it suffices that each fog node $k$ plays according to a no-regret decision-making strategy. By \textbf{Proposition~\ref{pr:RegretBGAM}} and \textbf{Proposition~\ref{pr:RegretLBWI}}, BGAM and LBWI are no-regret policies. 
\subsection{Supplementary of Simulation}
\label{app:supl-e}
In the \textbf{Cloud-Fog Computing Dataset}, we characterize the fog nodes' efficiency indices by considering the CPU rate and usage information with respect to each type of task. For cost indices, we use the CPU, memory, and bandwidth usage information with respect to each task type. Besides, all the indices are normalized to be in $[0,1]$. Since the power consumption is unavailable in the dataset, we assume the indices are far less than the other two. The following matrices show the value of indices ($K = 10$, $M = 10$).
\begin{gather*}
\tiny
\boldsymbol{\rho} =\begin{bmatrix}
0.47 & 0.40 & 0.65 & 0.30 & 0.34 & 0.51 & 0.27 &  0.44 & 0.57 & 0.57 \\
0.68 & 0.58 & 0.78 & 0.38 & 0.28 & 0.45 & 0.19 & 0.50 & 0.65 & 0.99 \\
0.98 & 0.70 & 0.18 & 0.99 & 0.56 & 0.61 & 0.68 & 0.57 & 0.51 & 0.33 \\
0.47 & 0.56 & 0.45 & 0.49 & 0.26 & 0.23 & 0.52 & 0.67 & 0.60 & 0.97 \\
0.30 & 0.37 & 0.33 & 0.23 & 0.74 & 0.39 & 0.71 & 0.48 & 0.48 & 0.83 \\
0.99 & 0.22 & 0.59 & 0.45 & 0.54 & 0.32 & 0.63 & 0.75 & 0.83 & 0.50\\
0.25 & 0.36 & 0.70 & 0.51 & 0.99 & 0.99 & 0.78 & 0.74 & 0.65 & 0.31\\
0.81 & 0.99 & 0.66 & 0.33 & 0.79 & 0.27 & 0.99 & 0.56 & 0.49 & 0.79\\
0.25 & 0.83 & 0.99 & 0.76 & 0.80 & 0.56 & 0.30 & 0.67 & 0.48 & 0.45 \\
0.44 & 0.50 & 0.47 & 0.52 & 0.70 & 0.72 & 0.50 & 0.82 & 0.39 & 0.91 
\end{bmatrix},
\end{gather*}
%
\begin{gather*}
\hspace{-15pt}
\tiny
\boldsymbol{\epsilon} =\begin{bmatrix}
0.014 & 0.006 & 0.004 & 0.008 & 0.008 & 0.012 & 0.02  & 0.006 & 0.006 & 0.04 \\
0.012 & 0.004 & 0.02  & 0.004 & 0.008 & 0.016 & 0.006 & 0.02  & 0.016 & 0.018 \\
0.008 & 0.042 & 0.008 & 0.02  & 0.018 & 0.04  & 0.008 & 0.01  & 0.006 & 0.002 \\
0.026 & 0.016 & 0.01  & 0.012 & 0.06  & 0.004 & 0.02  & 0.016 & 0.014 & 0.018 \\
0.018 & 0.046 & 0.014 & 0.028 & 0.046 & 0.004 & 0.024 & 0.022 & 0.024 & 0.01 \\
0.02  & 0.012 & 0.024 & 0.016 & 0.014 & 0.006 & 0.02  & 0.006 & 0.026 & 0.04 \\
0.012 & 0.004 & 0.02  & 0.004 & 0.008 & 0.036 & 0.006 & 0.02  & 0.016 & 0.018 \\
0.008 & 0.01  & 0.008 & 0.02  & 0.018 & 0.032 & 0.008 & 0.01  & 0.026 & 0.016 \\
0.026 & 0.036 & 0.01  & 0.012 & 0.06  & 0.018 & 0.02  & 0.016 & 0.014 & 0.024 \\
0.018 & 0.006 & 0.014 & 0.02  & 0.046 & 0.004 & 0.024 & 0.018 & 0.006 & 0.01 
\end{bmatrix},
\end{gather*}
\begin{gather*}
\tiny
\boldsymbol{\kappa} = \begin{bmatrix}
0.27 & 0.36 & 0.27 & 0.42 & 0.24 & 0.26 & 0.40 & 0.43 & 0.12 & 0.20 \\
0.16 & 0.42 & 0.19 & 0.42 & 0.40 & 0.28 & 0.38 & 0.16 & 0.28 & 0.20 \\
0.14 & 0.24 & 0.40 & 0.16 & 0.39 & 0.28 & 0.24 & 0.35 & 0.23 & 0.41 \\
0.33 & 0.40 & 0.40 & 0.22 & 0.39 & 0.42 & 0.17 & 0.02 & 0.40 & 0.19 \\
0.39 & 0.43 & 0.37 & 0.44 & 0.26 & 0.42 & 0.24 & 0.41 & 0.12 & 0.09 \\
0.18 & 0.39 & 0.19 & 0.19 & 0.29 & 0.42 & 0.16 & 0.32 & 0.11 & 0.21 \\
0.29 & 0.28 & 0.24 & 0.26 & 0.21 & 0.18 & 0.18 & 0.43 & 0.32 & 0.20 \\
0.15 & 0.24 & 0.20 & 0.24 & 0.19 & 0.42 & 0.13 & 0.29 & 0.11 & 0.37 \\
0.30 & 0.18 & 0.17 & 0.19 & 0.22 & 0.26 & 0.34 & 0.32 & 0.19 & 0.21 \\
0.30 & 0.40 & 0.27 & 0.32 & 0.32 & 0.25 & 0.28 & 0.14 & 0.48 & 0.09
\end{bmatrix}. 
\end{gather*}
\ifCLASSOPTIONcaptionsoff
  \newpage
\fi
\bibliographystyle{IEEEtran}
\bibliography{ref}
\end{document}